\documentclass[11pt]{article}

\usepackage{hyperref}
\hypersetup{colorlinks=true,citecolor=blue,linkcolor=blue,urlcolor=blue}
\usepackage{graphicx} % Required for inserting images
\usepackage{amsthm}
\usepackage{amsmath}
\usepackage{amssymb}  % in your preamble
\usepackage{amsfonts}
\usepackage{bbm}

\newtheorem{theorem}{Theorem}[section]
\newtheorem*{theorem*}{Theorem}
\newtheorem{corollary}[theorem]{Corollary}
\newtheorem{lemma}[theorem]{Lemma}

\theoremstyle{definition}

\newtheorem*{remark*}{Remark}
\usepackage{enumitem}

\theoremstyle{remark}
\newtheorem*{remark}{Remark}
\bibliographystyle{alpha}  % You can also use styles like unsrt, alpha, or IEEEtran

\usepackage{color}
\usepackage[margin=1in]{geometry}

\newcommand{\E}{\mathbb{E}}

\def\id{\text{id}}
\def\TRUE{\mathsf{TRUE}}
\def\FALSE{\mathsf{FALSE}}
\def\notalltrue{\Omega_n^-}
\def\minfalse{\mathrm{MinFalse}}
\def\smallsetfalse{S}
\def\extminfalse{\mathrm{ExtMinFalse}}

\title{One-Shot Learning for $k$-SAT}
% \author{}

\author{
 Andreas Galanis, Leslie Ann Goldberg, Xusheng Zhang \\
  \small University of Oxford, Oxford OX1 3QD, UK
}

% \date{February 2025}

\begin{document}
\maketitle
\let\thefootnote\relax\footnotetext{For the purpose of Open Access, the authors have applied a CC BY public copyright licence to any Author Accepted Manuscript version arising from this submission. All data is provided in full in the results section of this paper.}
\begin{abstract}
%% Text of abstract
Consider  a $k$-SAT formula $\Phi$ where every variable appears at most $d$ times, and let $\sigma$ be a satisfying assignment of $\Phi$ sampled proportionally to $e^{\beta m(\sigma)}$ where $m(\sigma)$ is the number of variables set to true and $\beta$ is a real parameter. Given $\Phi$ and $\sigma$, can we learn the value of $\beta$ efficiently?

This problem falls into a recent line of works about single-sample (``one-shot'') learning of Markov random fields. The $k$-SAT setting we consider here was recently studied by Galanis, Kalavasis, Kandiros (SODA 2024).   They showed that single-sample learning is possible when roughly $d\leq 2^{k/6.45}$ and  impossible when $d\geq (k+1) 2^{k-1}$. Crucially, for their impossibility results they used the existence of unsatisfiable instances which, aside from the gap in $d$, left open the question of whether the feasibility threshold for one-shot learning is dictated by the satisfiability threshold of $k$-SAT formulas of bounded degree.

Our main contribution is to answer this question negatively. We show that one-shot learning for $k$-SAT is infeasible well below the satisfiability threshold; in fact, we obtain impossibility results for degrees $d$ as low as $k^2$ when $\beta$ is sufficiently large, and bootstrap this to small values of $\beta$ when $d$ scales exponentially with $k$, via a probabilistic construction.  On the positive side, we simplify the analysis of the learning algorithm and obtain significantly stronger bounds on $d$ in terms of $\beta$. In particular, for the uniform case $\beta\rightarrow 0$ that has been studied extensively in the sampling literature, our analysis shows that learning is possible under the condition $d\lesssim 2^{k/2}$. 
Note that this is (up to constant factors) all the way to the sampling threshold -- it is   known that  sampling a uniformly-distributed satisfying assignment is NP-hard for $d\gtrsim 2^{k/2}$. 
\end{abstract}

\section{Introduction}
A key task that arises in statistical inference is to estimate the underlying parameters of a distribution, frequently based on the assumption that one has access to a sufficiently large number of independent and identically distributed (i.i.d.) samples. However, in many settings it is critical to perform the estimation with substantially fewer samples,  driven by constraints in data availability, computational cost, or real-time decision-making requirements. In this paper, we consider the extreme setting where only a single sample is available and investigate the feasibility of parameter estimation in this case. We refer to this setting as ``one-shot learning''.

Markov random fields (also known as undirected graphical models) are a canonical framework used to model high-dimensional distributions. The seminal work of \cite{Chatterjee07c} initiated the study of one-shot learning for the Ising and spin glass models, a significant class of Markov random fields that includes the well-known Sherrington-Kirkpatrick and Hopfield models. This approach was later explored in greater depth for the 
Ising model by \cite{BM18} and subsequently extended to tensor or weighted variants of the Ising model in \cite{GM20, MSB, DDDVK21}.
 Beyond the Ising model,  \cite{DDK19, DDP20}  examined one-shot learning in more general settings, notably including logistic regression and higher-order spin systems, obtaining  various algorithmic results in ``soft-constrained''  models, i.e., models where the distribution is supported on the entire state space. \cite{BR21} showed that efficient parameter estimation using one sample is still possible under the presence of hard constraints which prohibit certain states,  relaxing the soft-constrained assumption with ``permissiveness''; canonical Markov random fields in this class include various combinatorial models such as the hardcore model (weighted independent sets). Notably, in all these cases, one-shot learning is always feasible with mild average-degree assumptions on the underlying graph (assuming of course access to an appropriate sample).

More recently, \cite{GKK24} investigated one-shot learning for hard-constrained models that are not permissive, focusing primarily on $k$-SAT and proper colourings; in contrast to soft-constrained models, they showed that one-shot learning is not always possible and investigated its feasibility under various conditions. Their results left however one important question open for $k$-SAT, in terms of identifying the ``right'' feasibility threshold. In particular, their impossibility results were based on the existence of unsatisfiable instances for $k$-SAT, suggesting  that  it might be the satisfiability threshold that is most relevant for one-shot learning. Here we refute this in a strong way. We show infeasibility well below the satisfiability threshold, and obtain positive results that align closely with the conjectured threshold for sampling satisfying assignments.

\subsection{Definitions and Main Results}

In the $k$-SAT model, we consider the state space $\Omega_n:=\{\TRUE, \FALSE\}^n$, where each element is an assignment to $n$ Boolean variables.
The support of the Markov random field is then restricted to the set of assignments that satisfy a given $k$-CNF formula.
More precisely, we define $\Phi_{n,k,d}$ as the set of CNF formulas with $n$ variables such that each clause has exactly $k$ distinct variables and each variable appears in at most $d$ clauses. For an assignment $\sigma\in \Omega_n$ and a formula $\Psi\in \Phi_{n,k,d}$, we denote by $\sigma \models \Psi$ the event that $\sigma$ satisfies $\Psi$ and we denote by $m(\sigma)$
the number of variables that are assigned to~$\TRUE$ in~$\sigma$. (See  Section~\ref{sec:prelim} for further details.)

We study the weighted $k$-SAT model parametrized by $\beta$. 
For a fixed formula $\Psi\in \Phi_{n,k,d}$, the probability for each assignment $\sigma\in \Omega_n$ is given by
\begin{equation}\label{eq:model}
    \Pr_{\Psi, \beta}[\sigma] = 
\frac{ e^{\beta \, m(\sigma)} \,\mathbbm{1}[\sigma \models \Psi]}{
\sum_{\sigma \in \Omega_n} \, e^{\beta \, m(\sigma)} \,\mathbbm{1}[\sigma \models \Psi]
},
\end{equation}
Let $\Omega_n(\Psi):=\{\sigma\in \Omega_n: \sigma\models \Psi\}$ be the support of $\Pr_{\Psi, \beta}$.
When $\beta=0$, this distribution reduces to the uniform distribution over all satisfying assignments $\Omega_n(\Psi)$. For general $\beta\neq 0$, it biases the distribution toward assignments with more $\TRUE$ if $\beta>0$ and biases toward those with more $\FALSE$ if $\beta<0$. 

We consider the following one-shot learning task for  $\beta$. The learner knows parameters $d,k$ and a fixed formula $\Psi\in \Phi_{n,k,d}$.  Additionally, the learner has access to a single sample  $\sigma \in \Omega_n(\Psi)$ drawn from distribution $\Pr_{\Psi,\beta}[\cdot]$. The learner also knows that $\beta$ lies within a specified range $|\beta|\le B$, but it does not know the exact value of $\beta$. The goal is to estimate $\beta$ using these inputs.

To quantify the accuracy of our estimate, we say that $\hat{\beta}$ is an $\epsilon$-estimate if $|\beta-\hat{\beta}|\le \epsilon$. Typically we want $\epsilon$ to  decrease as $n$ increases so that $\epsilon\rightarrow0$ when $n\rightarrow \infty$. In this case we call $\hat{\beta}$ a consistent estimator. 
On the other hand, if there exists a constant $\epsilon_0>0$ such that 
$\limsup_n |\hat{\beta} - \beta| \ge \epsilon_0$, then $\hat{\beta}$ is not a consistent estimator and we say $\beta$ is not identifiable by $\hat{\beta}$. Finally,  if $\beta$ is not identifiable by any $\hat{\beta}$, we say it is impossible to estimate $\beta$.
 
Our main algorithmic result is a linear-time one-shot learning algorithm for $\beta$ in the weighted $k$-SAT model.           \begin{theorem}
\label{thm: algo bound}
    Let $B > 0$ be  a real number. Let $d,k\geq 3$ be integers such that 
    \begin{equation} \label{eq:k/2 condition}
        d\le \frac{1}{e^3\sqrt{k}} \cdot (1+e^{-B})^{\frac{k}{2}}.
    \end{equation} 
    There is an estimation algorithm
    which, for any $\beta^*$ with $|\beta^*| \leq B$,
    given any input $\Phi \in \Phi_{n,k,d}$ and a sample from $\sigma \sim \Pr_{\Phi, \beta^*}$, outputs in $O(n + \log (nB))$ time  an $O(n^{-1/2})$-estimate 
    $\hat{\beta}(\sigma)$ such that 
    \[
    \Pr_{\Phi, \beta^*}\left[
    \big|\hat{\beta}(\sigma) - \beta^*\big| = O(n^{-1/2}) \right] = 1- e^{-\Omega(n)}.
    \]
\end{theorem}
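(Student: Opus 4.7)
The plan is to estimate $\beta^*$ by a one-equation maximum pseudo-likelihood built on the variables that are locally unconstrained in the sample, simplifying the approach of \cite{GKK24} while exploiting a sharper local marginal bound. Call a variable $v$ \emph{free} in $\sigma$ if negating $\sigma_v$ (with $\sigma_{-v}$ fixed) still satisfies $\Phi$, and let $T_v(\sigma)\in\{0,1\}$ denote this indicator. Writing $p(\beta):=e^{\beta}/(1+e^{\beta})$, the model \eqref{eq:model} has the crucial property $\Pr_{\Phi,\beta^*}[\sigma_v=\TRUE\mid\sigma_{-v}]=p(\beta^*)$ whenever $T_v(\sigma)=1$, so the MPLE restricted to free variables takes the closed form $\hat\beta(\sigma)=\log(X/(Y-X))$ (clipped to $[-B,B]$), where
\[
X(\sigma):=\sum_v T_v(\sigma)\,\mathbbm{1}[\sigma_v=\TRUE],\qquad Y(\sigma):=\sum_v T_v(\sigma).
\]
Since each $T_v$ is read off in $O(kd)$ time from the clauses containing $v$, computing $X,Y$ takes $O(n)$ time, and the final logarithm accounts for the extra $O(\log(nB))$ runtime.

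The analysis reduces to two concentration claims via a mean-value argument on the score $f(\beta;\sigma):=X-Y\,p(\beta)$. The variable $Z_v:=T_v\bigl(\mathbbm{1}[\sigma_v=\TRUE]-p(\beta^*)\bigr)$ satisfies $\E[Z_v\mid\sigma_{-v}]=0$ by the conditional-probability identity above, so $\E[f(\beta^*;\sigma)]=0$. Since $f'(\beta)=-Y\,p(\beta)(1-p(\beta))$ and $p(\beta)(1-p(\beta))\ge c_B:=e^B/(1+e^B)^2$ for $|\beta|\le B$, we obtain the deterministic sensitivity bound
\[
|\hat\beta(\sigma)-\beta^*|\;\le\;\frac{|f(\beta^*;\sigma)|}{c_B\,Y(\sigma)}.
\]
It thus suffices to show, with probability $1-e^{-\Omega(n)}$, both (a) $Y(\sigma)\ge c\,n$ and (b) $\bigl|\sum_v Z_v\bigr|\le C\sqrt{n}$.

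The main obstacle is (a), which reduces to the uniform pointwise bound $\Pr_{\Phi,\beta^*}[T_v=0]\le \alpha<1$. Note that $T_v=0$ iff some clause $c\ni v$ is \emph{critical} in $\sigma_{-v}$, i.e.\ the $k-1$ literals of $c\setminus\{v\}$ are all falsified; a union bound over the $\le d$ clauses containing $v$ reduces the task to bounding the Gibbs marginal of a fixed assignment on $k-1$ prescribed variables. The naive worst-case estimate gives only the $d\lesssim 2^{k/6.45}$ bound of \cite{GKK24}. To reach the sampling-threshold rate \eqref{eq:k/2 condition} I would use a cluster-expansion / Lovász-Local-Lemma bound on the marginal that accounts for the $\beta^*$-bias at each literal (per-literal falsification of order $(1+e^{-B})^{-1/2}$ after balancing the two possible literal signs), yielding a per-clause bound of order $(1+e^{-B})^{-(k-1)/2}$ that matches \eqref{eq:k/2 condition} with room to spare.

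For (b) (and for the matching concentration of $Y$ around $\E[Y]$), each summand is a bounded function that depends on $\sigma$ only through its distance-$\le 2$ neighbourhood in the variable-clause incidence graph. Under \eqref{eq:k/2 condition} the Gibbs measure lies well inside the regime in which cluster-expansion / LLL methods yield exponential decay of correlations, so I would conclude the required $1-e^{-\Omega(n)}$ tails at scale $O(\sqrt n)$ either through a variance bound combined with an exponential-moment calculation exploiting the local-dependence structure, or via a bounded-differences martingale on an appropriate local resampling of $\sigma$. Composing (a) and (b) with the sensitivity bound above delivers $|\hat\beta(\sigma)-\beta^*|=O(n^{-1/2})$ with probability $1-e^{-\Omega(n)}$, as required.
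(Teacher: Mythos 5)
Your closed-form estimator is correct and is indeed the MPLE: since $\ln f_i(\beta;\sigma)=0$ whenever $v_i$ is not flippable, $F(\beta;\sigma)=\beta X - Y\ln(1+e^\beta)$ and the stationarity condition gives exactly $\hat\beta=\ln(X/(Y-X))$. The paper does a binary search for the root of $\partial F/\partial\beta$, which is the same thing; your closed form is a clean simplification. Your decomposition into (a) a lower bound on the flippable count $Y$ and (b) concentration of the first-derivative score also matches the structure of the paper's proof, where (a) is Lemma~\ref{lem: flippable} (equivalently the linear lower bound on $\partial^2 F/\partial\beta^2$) and (b) is handled by citing a second-moment bound $\E[(\partial F/\partial\beta)^2]\le kdn$ from \cite{GKK24}.

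The substantive problem is your argument for (a). You claim the $k/2$ exponent in \eqref{eq:k/2 condition} arises from a per-clause criticality probability of order $(1+e^{-B})^{-(k-1)/2}$, obtained by ``balancing the two possible literal signs.'' This cannot work: in a general $k$-CNF a clause may have all literals of the same (worst-case) sign, so under the relevant product measure the per-clause criticality probability is $(1+e^{-B})^{-(k-1)}$, with nothing to balance. The actual source of the $k/2$ exponent is entirely different: it is the Lov\'asz Local Lemma \emph{applicability} constraint. Taking $x_c\asymp 1/(d^2k)$, the LLL condition $\Pr_\mu[A_c]\le x_c\prod_{j\in\Gamma(c)}(1-x_j)$ requires roughly $(1+e^{-B})^{-k}\lesssim 1/(d^2k)$, i.e.\ $d\lesssim (1+e^{-B})^{k/2}/\sqrt k$. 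Once the LLL applies, the per-clause bound $d\,(1+e^{-B})^{-(k-1)}$ is already tiny, with plenty of slack. Moreover, a pointwise bound $\Pr_{\Phi,\beta^*}[T_v=0]\le\alpha<1$ gives you $\E[Y]=\Omega(n)$, not $Y=\Omega(n)$ with probability $1-e^{-\Omega(n)}$. The paper gets the exponential tail by applying Lemma~\ref{lem:LLL GJL19} to a \emph{batch} of $\Omega(n)$ well-separated variables at once: it bounds $\Pr_\mu\bigl[\sum_{i\le R} e_{v_i}(\sigma)<R/3\bigr]$ by a binomial union bound, then transfers to $\Pr_{\Phi,\beta^*}$ using the LLL comparison inequality \eqref{eq: LLL translation}. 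That batched transfer, rather than any per-variable correlation-decay estimate, is the technical novelty of the positive result and is what your proposal is missing. For (b), the paper simply invokes the GKK24 second-moment bound rather than the cluster-expansion/martingale machinery you sketch; your route could in principle give a stronger tail but as written it is hand-wavy and not what the paper does.
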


Our results improve upon the conditions in \cite{GKK24}, which ensure a consistent estimate under the requirement when $d\lesssim (1+e^{-B})^{k/6.45}$. Based on the corresponding threshold for approximate sampling, the conjectured “true” threshold for $d$ is of the order $(1+e^{-B})^{\frac{k}{2}}$. Consequently, our improved condition in \eqref{eq:k/2 condition} is only off by a polynomial factor in $k$ relative to this conjectured threshold.

For comparison with the approximate sampling threshold—commonly stated for the uniform $k$-SAT distribution—we specialize to 
$B\rightarrow0$.  In that limit, our algorithmic result for single-sample learning holds roughly when
$d\lesssim 2^{k/2}$.
The best currently known result for efficient sampling, due to \cite{WangYin24}, holds under the condition $d\lesssim 2^{k/4.82}$, see also the series of works \cite{Moitra19, FGYZ21a, JPV21b, HWY23a}. It is conjectured that the sharp condition for efficient sampling is $d\lesssim 2^{k/2}$, supported by matching hardness results for monotone formulas. It is known in particular that for $d\gtrsim 2^{k/2}$, no efficient sampling algorithm exists (unless $\mathsf{NP} = \mathsf{RP}$).

To complement our algorithmic result, we also present impossibility results, suggesting that conditions like \eqref{eq:k/2 condition} are nearly sharp.

\begin{theorem}
   \label{thm: impossible4}
   Let  $\beta^*$ be a real number such that $|\beta^*|>1$. 
       Let  $k \geq 4$ be an even integer, and let $n$ be a multiple of $k/2$ that is large enough. 
    If     \begin{equation}
        \label{eq: e bound}
        d \ge k^3 \left( 1 + \frac{e}{e^{|\beta^*|}-e} \right)^{\frac{k}{2}},
    \end{equation}
    then there exists a formula $\Phi\in \Phi_{n,k,d}$ such that 
    it is impossible to estimate $\beta^*$ from a sample $\sigma \sim \Pr_{\Phi, \beta^*}$ with high probability.
\end{theorem}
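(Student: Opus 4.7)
The plan is to apply a two-point (Le Cam) lower bound. I would exhibit a formula $\Phi\in\Phi_{n,k,d}$ together with an alternative parameter $\beta'\neq\beta^*$, with $|\beta'-\beta^*|$ bounded below by a fixed $\epsilon_0>0$, such that the total variation distance between $\Pr_{\Phi,\beta^*}$ and $\Pr_{\Phi,\beta'}$ tends to $0$ as $n\to\infty$. Because the weighted $k$-SAT model is a one-parameter exponential family with sufficient statistic $m(\sigma)$, the log-density ratio $\log(\Pr_{\Phi,\beta^*}[\sigma]/\Pr_{\Phi,\beta'}[\sigma])$ depends on $\sigma$ only through $m(\sigma)$. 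Thus it suffices to construct $\Phi$ so that $m(\sigma)$ is tightly concentrated on the support; Le Cam then rules out any estimator achieving accuracy $\epsilon_0/2$ with high probability under $\Pr_{\Phi,\beta^*}$.

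For the construction, I would use a probabilistic argument, as suggested by the abstract. Without loss of generality assume $\beta^*>1$ (the case $\beta^*<-1$ follows by negating all literals and swapping TRUE/FALSE), and sample $\lfloor dn/k\rfloor$ clauses where each clause is a uniformly random $k$-subset of variables with $k/2$ positive and $k/2$ negative literals. The balanced polarity pattern is what produces the exponent $k/2$ in \eqref{eq: e bound} and fits the hypotheses that $k$ is even and $n$ is a multiple of $k/2$. A standard truncation step on overloaded variables then yields $\Phi\in\Phi_{n,k,d}$ with positive probability.

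The heart of the proof is a first-moment computation. For $\sigma$ with $m(\sigma)=\alpha n$, the probability that a random balanced clause is violated is approximately $(\alpha(1-\alpha))^{k/2}$, so the $\beta^*$-weighted expected count of satisfying assignments with $m(\sigma)=\alpha n$ is $\exp(n\, f_{\beta^*}(\alpha))$ up to polynomial factors, where $H$ denotes binary entropy and
\[
f_\beta(\alpha)=H(\alpha)+\beta\alpha-\frac{d}{k}(\alpha(1-\alpha))^{k/2}.
\]
The condition \eqref{eq: e bound} is calibrated so that, near the maximizer $\alpha^*(\beta^*)$, the curvature of the third term strongly dominates the entropy, forcing $f_{\beta^*}$ to decrease rapidly away from $\alpha^*$ and making the expected count of satisfying $\sigma$ with $|m(\sigma)/n-\alpha^*|$ bounded away from $0$ exponentially small. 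A large-deviation estimate together with a union bound then confine the effective support to a narrow strip of $m$-values; the same strip works for $\beta'$ by continuity of $\alpha^*(\beta)$ in $\beta$.

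The main obstacle is making this concentration sharp enough that the fluctuation of $(\beta^*-\beta')\,m(\sigma)$ across the support is $o(1)$, since for constant $|\beta^*-\beta'|$ a naive Laplace estimate (giving $\Theta(\sqrt{n})$ width for $m$) would not suffice. Overcoming this either calls for a much tighter concentration argument or requires the construction to confine the support to a sub-$O(1)$ range of $m$-values, possibly through additional structural clauses that exploit the balanced polarity pattern to enforce a near-deterministic $m$-value on the satisfying set. The hypothesis $|\beta^*|>1$, equivalently $e\cdot e^{-|\beta^*|}<1$, is exactly what makes the algebraic balance between the entropy and curvature terms in $f_\beta$ feasible at the stated degree, giving the factor $(1-e\cdot e^{-|\beta^*|})^{-k/2}$. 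The polynomial prefactor $k^3$ absorbs the combinatorial slack from the Laplace estimate, the union bound over $O(n)$ candidate $m$-values, and the buffer required to make the argument robust to perturbations of $\beta$ in the two-point construction.
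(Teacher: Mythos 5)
The high-level strategy of a two-point lower bound is compatible with what the paper does (the paper shows $\Pr_{\Phi,\beta^*}[\sigma=\sigma^+] = 1-e^{-\Omega(n)}$ for all $\beta^*$ in the relevant range, which trivially kills total variation), but your construction does not work and the gap you flag as the ``main obstacle'' is in fact the entire content of the theorem. Concretely, a \emph{balanced} clause with $k/2$ positive and $k/2$ negative literals is never violated by flipping a single variable of $\sigma^+$ to $\FALSE$ when $k\ge 4$: there remain $k/2-1\ge 1$ positive literals satisfied. Hence under your random balanced formula \emph{every} variable is flippable in $\sigma^+$, so there are $n$ satisfying assignments at level $m=n-1$, each with weight $e^{-\beta^*}$ relative to $\sigma^+$; consequently $\Pr_{\Phi,\beta^*}[\sigma=\sigma^+]\le 1/(1+ne^{-\beta^*})\to 0$ and the sample $m(\sigma)$ has spread $\Theta(\sqrt{n})$. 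Far from being an impossibility instance, such a formula has $\Omega(n)$ flippable variables with high probability, which is exactly the regime where the pseudo-likelihood estimator of Theorem~\ref{thm: consistent} succeeds. Your first-moment heuristic $f_\beta(\alpha)=H(\alpha)+\beta\alpha-\frac{d}{k}(\alpha(1-\alpha))^{k/2}$ also misses the issue: near $\alpha=1$ the violation-probability term $(\alpha(1-\alpha))^{k/2}$ vanishes faster than the entropy, so the first moment does \emph{not} rule out a polynomially large cloud of near-$\sigma^+$ satisfying assignments.

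What the paper actually does is opposite in an essential way: it builds a deterministic gadget $\Psi_0$ (Lemma~\ref{lem: gadget}) whose clauses $W_{i,\ell}\vee\Pi_{i+k/2}$ contain exactly \emph{one} positive literal, so no variable is flippable in $\sigma^+$, and a cyclic linking of $k/2$-batches forces any non-$\sigma^+$ satisfying assignment to have at least $2n/k$ $\FALSE$s (Lemma~\ref{lem:CoverGadget}). The $k/2$ exponent comes from the batch size, not from balanced polarity. This gap is then \emph{amplified} (Lemma~\ref{lem: large gadget}) by superposing $J_*$ randomly permuted copies of $\Psi_0$ and running a first-moment argument \emph{over the permutations}, producing a formula $\Psi_2\in\Phi_{n,k,d}$ in which every non-$\sigma^+$ satisfying assignment has at least $n/b$ $\FALSE$s; choosing $b=(1-\alpha)^{-1}$ with $\alpha=1-e^{1-|\beta^*|}$ then makes $\sigma^+$ carry all but $e^{-\Omega(n)}$ of the mass (Theorems~\ref{thm: impossible3} and~\ref{thm: impossible4}). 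So the randomness in the paper is used to permute a structured gadget, not to generate clauses directly, and the unbalanced single-positive-literal structure is the crucial ingredient that your balanced construction lacks.
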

For the parameter $d$ around the satisfiability threshold, specifically at the uniquely satisfiable threshold $u(k)$ (see, e.g., \cite{MP10} on the connection between these two thresholds), if
\begin{equation}
    \label{eq:2^k}
    d \ge u(k) =\Theta\Big(\frac{2^{k}}{k}\Big),
\end{equation}
there exists a formula $\Phi\in\Phi_{n,k,d}$ such that it  is impossible to estimate $\beta^*$ from any number of samples $\sigma\sim \Pr_{\Phi, \beta^*}$ 
because $\Omega_n(\Phi)$ is a deterministic set consisting of a single satisfying assignment that does not depend on $\beta^*$. \cite{GKK24} explicitly construct such a formula $\Phi$, though it requires an additional $O(k^2)$ factor relative to \eqref{eq:2^k}, representing the previous best known condition for the impossibility of estimation.
Condition \eqref{eq: e bound} in Theorem~\ref{thm: impossible4} not only relaxes \eqref{eq:2^k} when $|\beta^*|$ grows large, but it also features the correct $k/2$
 exponent, matching that in both \eqref{eq:k/2 condition} and the conjectured threshold.
Indeed, when $B\approx |\beta^*|\rightarrow \infty$, conditions \eqref{eq:k/2 condition} and \eqref{eq: e bound} both take the form 
\begin{equation}
    (1+O(e^{-|\beta^*|}))^{k/2}\cdot k^{O(1)}
\end{equation}
These findings partially indicate that, at least for the 
$k$-SAT model, the sampling threshold is more relevant to one-shot learning than the satisfiability threshold.

In addition, we find that if we allow 
 $\beta^*$  to be proportional to  $k$ , then learning becomes impossible for a significantly larger range of $d$.
Specifically, unlike condition \eqref{eq: e bound}, which requires  $d$ to be exponential in $k$, here we only need 
$d$ to be quadratic in $k$, leading to a much sparser formula when $k$ is large.
\begin{theorem}
\label{thm:impossible2}
Let $k\geq 4$ be an even integer. For all $\beta^*\in\mathbb{R}$ such that 
         $|\beta^*| \ge k\ln2 $ the following holds. Let $n$ be a multiple of $k/2$ that is large enough. If
     $d\ge k^2/2$, then there exists a formula $\Phi\in \Phi_{n,k,d}$ such that 
    it is impossible to estimate $\beta^*$ from a sample $\sigma \sim \Pr_{\Phi, \beta^*}$ with high probability.
\end{theorem}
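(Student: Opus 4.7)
By the symmetry of the model under the swap $\TRUE\leftrightarrow\FALSE$ (equivalently $\beta\leftrightarrow -\beta$), I may assume without loss of generality that $\beta^*\ge k\ln 2$. The plan is to apply Le Cam's two-point method: for each sufficiently large $n$ that is a multiple of $k/2$, I would construct a formula $\Phi=\Phi_n\in\Phi_{n,k,d}$ and exhibit an alternative parameter $\beta'$ with $|\beta^*-\beta'|\ge \epsilon_0$ for a fixed constant $\epsilon_0>0$, such that the total variation $\mathrm{TV}(\Pr_{\Phi,\beta^*},\Pr_{\Phi,\beta'})$ stays bounded strictly below $1$. This precludes any estimator from distinguishing the two parameters with probability exceeding a constant, ruling out consistent estimation of $\beta^*$.

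The construction of $\Phi$ is a block-based gadget tailored to the regime $|\beta^*|\ge k\ln 2$. I partition the $n$ variables into $N:=2n/k$ blocks $B_1,\ldots,B_N$ of size $k/2$, fix an auxiliary regular graph $G$ on $[N]$ of degree $\Delta\le k^2/2$ (so each variable appears in $d\le k^2/2$ clauses), and for each edge $\{B_i,B_j\}$ of $G$ I include one or more $k$-clauses on the $k$ variables of $B_i\cup B_j$. The literal signs in these clauses are chosen so that every satisfying $\sigma$ is forced to assign each block $B_\ell$ to be either entirely $\TRUE$ or entirely $\FALSE$, with the set $F(\sigma)\subseteq[N]$ of all-$\FALSE$ blocks further constrained to be an independent set of $G$. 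Under this construction, $m(\sigma)=n-(k/2)\,|F(\sigma)|$, so the sufficient statistic depends on $\sigma$ only through $|F(\sigma)|$, and there is exactly one satisfying $\sigma$ per such independent set.

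The distributional analysis then reduces to a hard-core calculation: the pushforward of $\Pr_{\Phi,\beta}$ along $\sigma\mapsto F(\sigma)$ is the hard-core measure on $G$ with fugacity $\lambda(\beta):=e^{-\beta k/2}$. For $\beta\ge k\ln 2$ we have $\lambda(\beta)\le 2^{-k^2/2}$, so $|F(\sigma)|$ is approximately Poisson with mean $\mu(\beta)\approx N\lambda(\beta)$. Taking $\beta'=\beta^*+\epsilon_0$ for an appropriately small constant $\epsilon_0$ (depending on $k$), the fugacities $\lambda(\beta^*),\lambda(\beta')$ are close on a logarithmic scale, and a Poisson/Gaussian comparison controls the TV between the $|F|$-marginals uniformly in the relevant range of $n$, giving $\mathrm{TV}\le 1-c$. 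Since the map $\sigma\mapsto F(\sigma)$ is a bijection on the support (one $\sigma$ per independent set), the TV of the full distributions equals the TV of these marginals, completing the two-point argument.

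The main obstacle I foresee is the combinatorial construction of clauses enforcing the ``monochromatic block'' structure on every satisfying assignment while keeping the per-variable clause count at most $k^2/2$. Encoding an equality-type constraint inside each block (forcing all $k/2$ variables of $B_\ell$ to agree) is essentially $2$-local and cannot be realized by a single $k$-clause, so the rigidity must emerge from the combined effect of many $k$-clauses spanning $G$-adjacent block-pairs. A heuristic accounting -- $k/2$ variables per block coupled to $\Theta(k)$ neighboring blocks in $G$ -- yields exactly $k^2/2$ clause appearances per variable, matching the hypothesis $d\ge k^2/2$. I expect this tight counting, and the verification that the resulting $k$-clauses simultaneously forbid every partial-$\FALSE$ configuration within a block, to be the crux of the proof, while the subsequent hard-core/Poisson analysis should be relatively standard.
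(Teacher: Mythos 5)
Your proposal takes a genuinely different route from the paper, but it has two serious gaps — one in the construction and one in the distributional analysis — and either one alone would be fatal.

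The construction gap is the one you yourself flag, and it is not merely a technical nuisance: enforcing that every satisfying assignment makes every block $B_\ell$ monochromatic cannot be done with degree $d\leq k^2/2$ using $k$-clauses that span pairs of blocks. To see this, consider $\sigma$ in which $B_\ell$ carries a fixed mixed pattern $\tau\in\{\TRUE,\FALSE\}^{k/2}\setminus\{\mathrm{all}\text{-}\TRUE,\mathrm{all}\text{-}\FALSE\}$ while every other block is all-$\TRUE$. Any clause touching $B_\ell$ has the form $L_\ell\vee L_j$ with $L_\ell$ a $(k/2)$-clause on $B_\ell$ and $L_j$ a $(k/2)$-clause on a $G$-neighbour $B_j$; since $B_j$ is all-$\TRUE$, the clause can only be falsified if $L_j$ has all literals negated and $L_\ell$ is the unique $(k/2)$-clause falsified by $\tau$. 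Thus forbidding this $\sigma$ requires a clause whose $B_\ell$-half is exactly the forbidding clause for $\tau$, and this must hold for each of the $2^{k/2}-2$ mixed patterns $\tau$. Each such clause contains all $k/2$ variables of $B_\ell$, so every variable in $B_\ell$ would appear in at least $2^{k/2}-2$ clauses — exponential in $k$, and far above $k^2/2$ once $k$ grows. No routing of the forbidden patterns across the $\Theta(k)$ neighbours avoids this, because the clauses on $B_\ell$'s side cannot be shared.

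The second gap is in the Le~Cam step, and it means your gadget — even if it existed — would not prove impossibility; in fact it would show the opposite. With $\lambda(\beta)=e^{-\beta k/2}$ small but fixed (for fixed $k$) and $N=2n/k\to\infty$, the hard-core model you describe has $|F(\sigma)|$ approximately $\mathrm{Poisson}(N\lambda(\beta))$, with mean growing linearly in $n$. Taking $\beta'=\beta^*+\epsilon_0$ with a constant $\epsilon_0>0$ changes the mean by a fixed multiplicative factor $e^{-\epsilon_0 k/2}$, so the difference of means is $\Theta(N\lambda)$ while the standard deviation is only $\Theta(\sqrt{N\lambda})$; the total variation between the two $|F|$-laws therefore tends to $1$, not stays bounded below $1$. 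Put differently, $\hat\beta := -\tfrac{2}{k}\ln\bigl(|F(\sigma)|/N\bigr)$ is a consistent estimator of $\beta^*$ on your gadget. This illustrates why the paper's gadget insists on a much stronger gap: Lemma~\ref{lem:CoverGadget} forces every satisfying $\sigma\neq\sigma^+$ to have at least $2n/k$ variables set to $\FALSE$ — a \emph{linear} number — whereas your construction admits satisfying assignments with only $k/2$ $\FALSE$s (a single $\FALSE$ block). With the linear gap and $\beta^*\geq k\ln 2$, all non-$\sigma^+$ assignments together contribute weight at most $2^n e^{\beta^*(n-2n/k)}\leq e^{\beta^* n}2^{-n}$, so $\Pr_{\Psi_0,\beta^*}[\sigma=\sigma^+]\geq 1/(1+2^{-n})$ and the sample is essentially deterministic; this makes the paper's argument an immediate ``the sample carries no information'' statement, with no need for a two-point TV bound at all. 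The paper achieves the linear-gap property with a cyclic gadget $\Psi_0 = \bigwedge_{i,\ell}(W_{i,\ell}\vee\Pi_{i+k/2})$ whose analysis propagates a chain of forced $\FALSE$s around the cycle, a quite different mechanism from block-rigidity plus an independent-set constraint.
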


\begin{remark}
In the regimes where 
Theorem~\ref{thm: impossible4} or Theorem~\ref{thm:impossible2} apply, 
the corresponding formula $\Phi$ ensures that there is a single assignment that is the output with all but exponentially-small probability, regardless of the value of~$\beta^*$.
Hence, the proof of this theorem guarantees that it is impossible to learn from  exponentially many independent samples. 
Moreover, for any pair of $(\beta_1, \beta_2)$ such that $|\beta_1| > |\beta_2|\ge |\beta^*|$ and $\beta_1\beta_2\ge0$, no hypothesis testing for $H_0: \beta =\beta_1$ versus $ H_1: \beta =\beta_2$ can be done to distinguish  $\Pr_{\Phi, \beta_1}$ from  $\Pr_{\Phi, \beta_2}$, that is, there exists no sequence of consistent test functions $\phi_n:\Omega_n \rightarrow\{0,1\}$ such that 
\[
\lim_{n\rightarrow\infty}\E_{\sigma\sim \Pr_{\Phi, \beta_1}} \phi_n(\sigma) = 0 \text{ and } \lim_{n\rightarrow\infty}\E_{\sigma\sim \Pr_{\Phi, \beta_2}} \phi_n(\sigma) = 1.
\]
\end{remark}

\subsection{Proof overview}
We prove Theorem~\ref{thm: algo bound} by using the maximum pseudo-likelihood estimator. 
Establishing the consistency of this estimator—as stated in Theorem~\ref{thm: consistent}—requires demonstrating that the log-pseudo-likelihood function is (strongly) concave; see \eqref{eq: GKK bound 2} for the precise formulation.  In the 
$k$-SAT setting, showing such concavity amounts to showing that with high probability over samples drawn from $\Pr_{\Phi,\beta^*}[\cdot]$, each sample contains a linear number of “flippable variables.” 
We prove this property in Lemma~\ref{lem: flippable} by applying the  Lov\'asz local lemma (LLL), which enables us to compare $\Pr_{\Phi,\beta^*}[\cdot]$  to a suitable product distribution—under which the number of flippable variables is guaranteed to be linear. Notably, we apply the LLL directly to a non-local set of variables, in contrast to previous analyses that confined the application of the LLL to local neighbourhoods—a restriction that typically imposes stronger constraints on the parameter regime. By circumventing these stronger constraints, our approach achieves its guarantee under the nearly optimal LLL condition.

The main technical novelty of this paper is our negative results. To explain these,  we begin by outlining the proof of Theorem~\ref{thm:impossible2}. For simplicity, we focus on the case where  $\beta^*\ge 0$. At a high level, we construct a gadget formula $\Psi_0$ for which the all-true assignment $\sigma^+$ carries almost all of the probability mass under $ \Pr_{\Psi_0,\beta^*}[\cdot]$ provided that $\beta^*\ge k \ln 2$. Consequently, a sample $\sigma\sim \Pr_{\Psi_0,\beta^*}[\cdot]$ drawn from this distribution is nearly deterministic, offering virtually no information about $\beta^*$
  and thus rendering learning impossible.
The key property of this gadget is established in Lemma~\ref{lem: gadget}. 
Specifically, the lemma ensures that $\sigma^+$ satisfies $\Psi_0$ and that any other assignment $\sigma$ with 
fewer than $2n/k$ variables set to $\FALSE$ is not a satisfying assignment of $\Psi_0$. 
We achieve this by incorporating a cyclic structure over the variables that enforces global correlation among the $\FALSE$ values in the assignments. In particular, there are no flippable variables in $\sigma^+$.

Towards the proof of Theorem~\ref{thm: impossible4},
we first leverage the gadget $\Psi_0$ to show the existence of a stronger gadget $\Psi_2$, parametrized by $b>1$ in Lemma~\ref{lem: large gadget}, which guarantees that the all-true assignment $\sigma^+$ satisfies $\Psi_2$ and any other assignment 
with fewer than $n/b$ variables set to $\FALSE$  fails to satisfy~$\Psi_2$. 
Then we choose $b$ appropriately in terms of $\beta^*$ to make sure that $\sigma^+$ carries nearly all of the probability mass, using some more technical estimates for the corresponding partition function.
To build $\Psi_2$, we take a finite number of replicas of $\Psi_0$ on randomly permuted sets of variables. The existence of the desired formula is established using the probabilistic method; specifically, we demonstrate an upper bound on the expectation of $m(\sigma)$ for any satisfying assignment $\sigma\neq \sigma^+$, over the choice of the permutations.

\subsection{Related work}
\noindent \textbf{Parameter Estimation in Markov Random Fields.}  A large body of work has focused on parameter estimation under the one-shot learning paradigm (see, e.g., \cite{Chatterjee07c, BM18, DDK19, DDP20, GM20, DDDVK21, MSB}), particularly for Ising-like models in statistical physics and for dependent regression models in statistics. In this work, we follow a similar approach by establishing the consistency of the maximum pseudo-likelihood estimator.  Earlier studies (e.g., \cite{Gidas88, CometsGidas91, Comets92, GeyerThompson92}) have also explored parameter estimation in Markov random fields using the maximum likelihood estimator.

Before our work, the papers~\cite{BR21,GKK24} were the first to study one-shot learning in hard-constrained models. In particular, the hardcore model analysed in \cite{BR21} can be viewed as a weighted monotone $2$-SAT model, and one natural extension of the hardcore model to $k$-uniform hypergraphs corresponds to the class of weighted monotone $k$-SAT models—a special case of the weighted $k$-SAT models that we consider. Because a typical assignment in these monotone formulas possesses $\Omega(n)$ flippable variables, the pseudo-likelihood estimator remains consistent across all parameter regimes, and no phase transition is expected. 
The weighted $k$-SAT problem was analysed in \cite{GKK24}, where the authors derived both a consistency condition and an impossibility condition, though a substantial gap remained between them. By tightening the bounds on both ends, our work considerably narrows this gap, nearly closing it entirely.
\medskip

\noindent\textbf{Related Works in Structural Learning/Testing.}
An alternative direction in learning Markov random fields involves estimating the interaction matrix between variables—a question originally posed by \cite{ChowLiu68}. For the Ising model, this problem has been extensively studied (see, e.g., \cite{Bresler15, VMLC16, BGS17} and the references therein), and subsequent work has extended the results to higher-order models \cite{KilvansMeka17, HKM17, GMM24}. Recent work \cite{ZhangKKW20, DDDVK21, GM24} has also considered the joint learning of both structure and parameters. Moreover, \cite{SanthanamWainwright12} establishes the information-theoretic limits on what any learner can achieve, and similar analyses have been conducted for hardcore models \cite{BGS14a, BGS14b}.
While some approaches in this line of work require multiple independent samples, as noted in \cite{DDDVK21}, it is also possible to reduce learning with $O(1)$ samples to a class of special cases within one-shot learning. 
Related problems in one-shot testing for Markov random fields have also been studied in \cite{BreslerNagaraj18,DDK18,MukherjeeMukherjeeYuan18,BBCSV20,BCSV21}.

\subsection{$k$-SAT Notation}
\label{sec:prelim}
We use standard notation for the $k$-SAT problem. A formula $\Phi=(V,C)$ denotes a  \emph{CNF (Conjunctive Normal Form) formula} with variables   $V=\{x_1,\dots, x_n\}$ and clauses~$C$. We use $\sigma(x_i)$ and $\sigma_i$ to denote the truth value of the variable~$x_i$ under an assignment $\sigma \colon V \to \{\TRUE,\FALSE\}$.
For any clause $c\in C$,   $var(c)$ denotes the set of variables appearing in $c$ (negated or not).  
The \emph{degree} of a variable~$x_j$ in~$\Phi$ is   the number of clauses in which $x_j$ or $\neg x_j$ appears,  namely $|\{c\in C: x_j\in var(c)\}|$. The degree of~$\Phi$ is the maximum, over $x_j \in V$, of the degree of $x_j$ in~$\Phi$.   
As noted in the introduction, $m(\sigma)$
denotes the number of variables that are assigned to~$\TRUE$ by an assignment~$\sigma$. Namely, 
$m(\sigma):=|\{i\in [n]:\sigma_i = \TRUE\}|$.

\section{Maximum pseudo-likelihood estimator: the proof of Theorem~\ref{thm: algo bound}}
Section~\ref{sec: MPLE} introduces the fundamentals of the maximum pseudo-likelihood estimator and analyses its running time for solving the weighted $k$-SAT problem. 
In Sections~\ref{sec: consistency} and \ref{sec: flippable},
we establish the estimator’s consistency.

\subsection{Overview of maximum (pseudo)-likelihood estimation}
\label{sec: MPLE}
For a $k$-SAT formula $\Psi$ and a satisfying assignment~$\sigma$, we will use $f(\beta;\sigma)$ to denote the quantity~$\Pr_{\Psi,\beta}(\sigma)$ from \eqref{eq:model}, i.e., $f(\beta; \sigma) \;=\; 
e^{\beta \,m(\sigma)}/Z(\beta; \Psi)$, where
 $Z(\beta;\Psi)$ is the normalising constant of the distribution (the partition function).

A standard approach to parameter estimation is to find
$
    \hat{\beta}_{\mathrm{MLE}}(\sigma) 
    \;:=\; \arg\max_{\beta}\, f(\beta; \sigma)$,
which is commonly referred to as the maximum likelihood estimate (MLE). However, two main obstacles arise when applying MLE directly to the weighted \(k\)-SAT problem. First, (approximately)
computing \(Z(\beta; \Psi)\) is generally intractable because it is an NP-hard computation. Second, even if an approximation algorithm exists for computing \(\hat{\beta}_{\mathrm{MLE}}(\sigma)\), there is no guarantee of its consistency, i.e., there is no guarantee that with high probability it is close to $\beta^*$.
Hence, we take a  computationally more tractable variant of MLE from \cite{Besag74}  which is called the maximum pseudo-likelihood estimation. 
Let $f_i(\beta; \sigma)$ be the conditional probability of $\sigma_i$,  in a distribution with parameter \(\beta\), conditioned on the value of $\sigma_{-i}:=(\sigma_j)_{j\neq i}$. 
The maximum pseudo-likelihood estimate (MPLE) is defined as
\[
    \hat{\beta}_{\mathrm{MPLE}}(\sigma) :=\; \arg\max_{\beta}\, \prod_{i\in V} f_i(\beta; \sigma)
    =\; \arg\max_{\beta}\, \sum_{i\in V} \ln f_i(\beta; \sigma).
\]

Here, the objective function $F(\beta;\sigma) :=\sum_{i\in V} \ln f_i(\beta; \sigma)$ is the so-called \emph{log-pseudo-likelihood function}. 
For the weighted $k$-SAT problem, it is not hard to compute $f_i(\beta; \sigma)$ as
\[
f_i(\beta;\sigma) = \frac{e^{\beta \sigma_i}}{e^\beta \mathbbm{1}[\sigma_{-i} \wedge (\sigma_i \leftarrow \TRUE)] + \mathbbm{1}[\sigma_{-i} \wedge (\sigma_i \leftarrow \FALSE)]},
\]
where  $\mathbbm{1}[\omega]$ is shorthand for 
$\mathbbm{1}[\omega \models \Psi]$.
So we can write the log-pseudo-likelihood function for $k$-SAT as
\begin{align*}
     F(\beta;\sigma)  &=
\sum_{i\in V} \ln \left( \frac{e^{\beta \sigma_i}}{e^\beta \mathbbm{1}[\sigma_{-i} \wedge (\sigma_i \leftarrow \TRUE)] + \mathbbm{1}[\sigma_{-i} \wedge (\sigma_i \leftarrow \FALSE)]} \right), \\
&= \beta m(\sigma) -\sum_{i\in V} \ln \left( e^\beta \mathbbm{1}[\sigma_{-i} \wedge (\sigma_i \leftarrow \TRUE)] + \mathbbm{1}[\sigma_{-i} \wedge (\sigma_i \leftarrow \FALSE)] \right).
\end{align*}
For a fixed $\sigma$, 
$F(\cdot;\sigma):\mathbb{R}\rightarrow\mathbb{R}$ is a function of $\beta$.
By taking derivative with respect to $\beta$, we obtain 
\begin{equation}
    \label{eq:derivative F}
    \frac{\partial F(\beta;\sigma)}{\partial \beta} =
    m(\sigma) - \sum_{i\in V} \frac{e^\beta  \mathbbm{1}[\sigma_{-i} \wedge (\sigma_i \leftarrow \TRUE)] }{ e^\beta \mathbbm{1}[\sigma_{-i} \wedge (\sigma_i \leftarrow \TRUE)] + \mathbbm{1}[\sigma_{-i} \wedge (\sigma_i \leftarrow \FALSE)]}. 
\end{equation}
Clearly, $\frac{\partial F(\beta;\sigma)}{\partial \beta}$ is a decreasing function of $\beta$, which implies that $F(\beta;\sigma)$ has a unique global maximum  that is achieved when $\frac{\partial F(\beta;\sigma)}{\partial \beta}=0$.
Therefore, $ \hat{\beta}_{\mathrm{MPLE}}(\sigma)$ can be uniquely defined to be the maximum of $F(\beta;\sigma)$. 

Moreover, provided $|\hat{\beta}_{\mathrm{MPLE}}(\sigma)|\le 2B$, an  $\epsilon$-close estimate of $\hat{\beta}_{\mathrm{MPLE}}(\sigma)$ can be computed, using $O(\ln{(B/\epsilon)})$ steps of binary search for the solution to $\frac{\partial F(\beta;\sigma)}{\partial \beta}=0$. At each step of the binary search, we evaluate $\frac{\partial F(\beta;\sigma)}{\partial \beta}$ and adjust the binary search interval based on its sign.
A naive evaluation of $\frac{\partial F(\beta;\sigma)}{\partial \beta}$ as in \eqref{eq:derivative F} would require $\Theta(n)$ operations  per step. We can reduce this by exploiting the fact that the summand
\[
S_i(\beta):=\frac{e^\beta  \mathbbm{1}[\sigma_{-i} \wedge (\sigma_i \leftarrow \TRUE)] }{ e^\beta \mathbbm{1}[\sigma_{-i} \wedge (\sigma_i \leftarrow \TRUE)] + \mathbbm{1}[\sigma_{-i} \wedge (\sigma_i \leftarrow \FALSE)]}
\] 
can take only three values $\{0,1, e^{\beta}/(1+e^{\beta})\}$. Hence, by grouping the summands according to their values, we obtain
\begin{equation}
    \label{eq: derivative F2}
 m(\sigma) - \sum_{i\in V} S_i( \beta)
 = m(\sigma) -  |\{i\in V: S_i(\beta) = 1\}|- \frac{e^\beta}{1+e^\beta} \cdot \left| \left\{i\in V: S_i(\beta) = \frac{e^\beta}{1+e^\beta} \right\} \right|.
\end{equation}
Crucially, the sets $\{i\in V: S_i(\beta) = 1\}$ and $\{i\in V: S_i(\beta) = \frac{e^\beta}{1+e^\beta} \}$ do not depend on $\beta$, and thus can be computed only once before the binary search.
After this preprocessing, each evaluation of $\frac{\partial F(\beta;\sigma)}{\partial \beta}$
  can be done in $O(1)$ time using the decomposition in \eqref{eq: derivative F2}. Overall, to achieve an $O(n^{-1/2})$-close estimate, the total running time is 
  $O(n+ \log (nB)).$

\subsection{Consistency of MPLE}
\label{sec: consistency}
In Section~\ref{sec: MPLE}  we gave an algorithm with running time $O(n + \log(nB) )$, which 
takes as input a formula $\Psi$ and an assignment $\sigma\in\Omega_n(\Psi)$ and outputs an estimate $\hat{\beta}(\sigma)$ 
which  satisfies
\[
\left| \hat{\beta}(\sigma) -\hat{\beta}_{\mathrm{MPLE}}(\sigma) \right| = O(n^{-1/2}),
\]
provided $|\hat{\beta}_{\mathrm{MPLE}}(\sigma)|\le 2B$.
Theorem~\ref{thm: algo bound} follows immediately from the 
Theorem~\ref{thm: consistent}, which demonstrates
$O(\sqrt{n})$-consistency.
\begin{theorem}
    \label{thm: consistent} Let $\beta^*$ be a real number, and let $d,k\geq 3$ be integers such that 
    \begin{equation}
        \label{eq:k/2 condition 2}
        d\le \frac{1}{e^3\sqrt{k}} \cdot (1+e^{-|\beta^*|})^{\frac{k}{2}}.
    \end{equation} 
    For any integer $n$ and $\Phi \in \Phi_{n,k,d}$, 
    \begin{equation}
        \label{eq:consistency}
         \Pr_{\Phi, \beta^*} \left[ 
            \left| 
                \hat{\beta}_{\mathrm{MPLE}}(\sigma) -\beta^*
            \right| = O(n^{-1/2})
         \right] = 1- e^{-\Omega(n)}.
    \end{equation}
\end{theorem}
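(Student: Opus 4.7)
My plan is to establish consistency via the standard two-step recipe for M-estimators: a concentration bound on the score $F'(\beta^*;\sigma)$ together with a uniform lower bound on the curvature $-F''(\beta;\sigma)$ in a neighbourhood of $\beta^*$. Concretely, since $F(\cdot;\sigma)$ is smooth and concave with unique maximiser $\hat{\beta}_{\mathrm{MPLE}}$, a mean-value expansion around $\beta^*$ gives
\[
|\hat{\beta}_{\mathrm{MPLE}}(\sigma) - \beta^*| \;\le\; \frac{|F'(\beta^*;\sigma)|}{\inf_{\beta \in I}\bigl(-F''(\beta;\sigma)\bigr)}
\]
for any interval $I$ containing both $\beta^*$ and $\hat{\beta}_{\mathrm{MPLE}}(\sigma)$. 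If I can show the numerator is $O(\sqrt{n})$ and the infimum in the denominator is $\Omega(n)$ with probability $1-e^{-\Omega(n)}$, the $O(n^{-1/2})$ bound follows (one must also check, by a crude tail estimate at large $|\beta|$, that $\hat{\beta}_{\mathrm{MPLE}}$ lies in a bounded interval so that the denominator bound applies, but this is easy).

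For the gradient, I would use the decomposition $F'(\beta^*;\sigma)=\sum_{i\in V}\bigl(\mathbbm{1}[\sigma_i=\TRUE]-S_i(\beta^*)\bigr)$. The key point is that each summand has mean zero once we condition on $\sigma_{-i}$: this is precisely the unbiasedness of the pseudo-likelihood score and follows directly from writing out the conditional probability of $\sigma_i$ given $\sigma_{-i}$. Concentration around zero then comes from a bounded-differences (Azuma/McDiarmid) argument applied to a Doob martingale revealing $\sigma$ coordinate by coordinate: flipping $\sigma_j$ alters $S_i(\beta^*)$ only for those $i$ sharing a clause with $j$, of which there are $O(kd)$. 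Since $kd$ is a constant in $k$ and $\beta^*$, the martingale differences are bounded, yielding $|F'(\beta^*;\sigma)| = O(\sqrt{n})$ except on an event of probability $e^{-\Omega(n)}$.

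For the curvature, a direct calculation from the formula for $S_i(\beta)$ gives $-F''(\beta;\sigma) = \tfrac{e^\beta}{(1+e^\beta)^2}\cdot N_{\mathrm{flip}}(\sigma)$, where $N_{\mathrm{flip}}(\sigma)$ counts variables $i$ for which both $\sigma_{-i}\wedge(\sigma_i\leftarrow\TRUE)$ and $\sigma_{-i}\wedge(\sigma_i\leftarrow\FALSE)$ satisfy $\Phi$. The prefactor $e^\beta/(1+e^\beta)^2$ is bounded below by a positive constant on any bounded interval around $\beta^*$, so the whole question reduces to showing $N_{\mathrm{flip}}(\sigma)=\Omega(n)$ with probability $1-e^{-\Omega(n)}$. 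I would invoke Lemma~\ref{lem: flippable} (assumed at this point) precisely to get this lower bound under the hypothesis \eqref{eq:k/2 condition 2}.

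The main obstacle is the curvature bound, that is, Lemma~\ref{lem: flippable} itself: one must upgrade an LLL-type sparsity condition of the form $d\lesssim (1+e^{-|\beta^*|})^{k/2}/\sqrt{k}$ into a linear lower bound on the number of flippable variables under $\Pr_{\Phi,\beta^*}$. As hinted in the proof overview, the subtlety is that flippability is not a local event in the constraint graph (one must certify that no far-away clause becomes unsatisfied after flipping), so a naive local LLL bound on the marginal of each $\sigma_i$ is not enough to conclude. The sharper $k/2$ exponent, together with the $\sqrt{k}$ loss, strongly suggests a marking-style argument in the spirit of recent $k$-SAT sampling work, where one first passes to a carefully chosen subset of "marked" variables on which the induced LLL is tight, and then transfers the flippability bound back to the original variables. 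Once Lemma~\ref{lem: flippable} is in hand, everything else --- the gradient concentration and the Taylor-type conclusion --- is comparatively routine.
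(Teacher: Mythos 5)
Your high-level skeleton (mean-value bound, $O(\sqrt n)$ control on the score, $\Omega(n)$ lower bound on curvature via the flippable-variable count, curvature identified correctly as the crux) matches the paper's. The paper does not re-derive the M-estimator machinery; it cites \cite{GKK24} for the second-moment bound $\E_{\Phi,\beta}[(F'(\beta;\sigma))^2]\le kdn$ and for the identity $F''(\beta;\sigma)=\tfrac{e^\beta}{(1+e^\beta)^2}\sum_v e_v(\sigma)$, and then spends all its effort on Lemma~\ref{lem: flippable}. Your Doob-martingale/McDiarmid route for the score is a legitimate variant, but beware that bounded per-coordinate differences do not automatically give concentration under the dependent measure $\Pr_{\Phi,\beta^*}$; the paper sidesteps this by quoting a variance bound from \cite{GKK24}.

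The substantive problem is your discussion of how Lemma~\ref{lem: flippable} would be proved. First, your premise is wrong: flippability of $v_i$ \emph{is} a local event. Flipping $\sigma_i$ can only affect clauses containing $v_i$, so $e_{v_i}(\sigma)$ is determined by $\sigma$ restricted to variables within distance $2$ of $v_i$ in the constraint hypergraph; there is no issue of a ``far-away clause becoming unsatisfied.'' Second, the paper's central algorithmic novelty is precisely \emph{not} to use the marking-style argument you propose: marking is what \cite{GKK24} did, and it is exactly what forced the weaker exponent $k/6.45$. Instead, the paper picks a set $U$ of $R=\Omega(n)$ variables pairwise at distance $\geq 100$, bounds the probability \emph{under the product measure} $\mu$ that fewer than $R/3$ of them are flippable via a union bound (a variable is unflippable only if some incident clause is one literal from being violated, which has probability at most $d\,(e^\beta/(1+e^\beta))^{k-1}$), and then transfers from $\mu$ to $\Pr_{\Phi,\beta^*}$ in a single application of the LLL comparison of \cite{GJL19}, paying $(1-x_c)^{-|\Gamma(E)|}$ with $|\Gamma(E)|=O(Rd^2k)=O(n)$. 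The $k/2$ exponent appears because $d$ enters the LLL condition \eqref{eq:LLL condition} quadratically. So the genuine subtlety is not non-locality but \emph{aggregation}: one needs an exponentially-good bound on a joint count over $\Omega(n)$ local events, and the clean way to get it is to apply the LLL comparison once to a batch of well-separated variables, eliminating marking entirely.
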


As in the standard literature \cite{Chatterjee07c}, 
the consistency result can be proved using bounds on the derivatives of the log-pseudo-likelihood function $F$.
In particular, following  the analysis in \cite{GKK24}, 
\eqref{eq:consistency} is a consequence of these two bounds:
\begin{enumerate}
    \item A uniform linear upper bound of the second moment of the first derivative of $F$: for all $\beta\in \mathbb{R}$,
    \begin{equation}
        \label{eq: GKK bound 1}
    \E_{\Phi, \beta} \left[
        \left(
            \frac{\partial F(\beta;\sigma)}{\partial \beta}
        \right)^2
        \right] \le kdn.
    \end{equation}
    \item 
    With high probability over $\sigma\sim  \Pr_{\Phi, \beta^*}$, 
    there is a uniform lower bound of the second derivative of $F$:
    \begin{equation}
        \label{eq: GKK bound 2}
          \Pr_{\Phi, \beta^*} \left[
          \inf_{\beta\in \mathrm{R}} 
            \frac{\partial^2 F(\beta;\sigma)}{\partial \beta^2}
            = \Omega(n)
          \right] = 1- e^{-\Omega(n)}.
    \end{equation}
\end{enumerate}
In brief, \eqref{eq: GKK bound 1} and \eqref{eq: GKK bound 2} control the first and second-derivative of the log–pseudo-likelihood function, for most $\sigma$. A second-order Taylor approximation around $\beta^*$, combined with these bounds, yields \eqref{eq:consistency}. We refer interested readers to the proof of Theorem 1.1 in \cite{GKK24}, where the complete argument is carried out in detail.

Moreover,  Lemma 3.1 in \cite{GKK24} proves the bound \eqref{eq: GKK bound 1} for all $\beta\in \mathbb{R}$ and all $d,k\ge 3$, and they give a combinatorial expression (equations (3.9) and (3.10) in \cite{GKK24}) for $ \frac{\partial^2 F(\beta;\sigma)}{\partial \beta^2}$ that will be the useful for proving \eqref{eq: GKK bound 2}.
To state this expression, we introduce the notion of flippable variables. We say a variable $v_i$ is \emph{flippable} in $\sigma$ if the assignment, obtained by flipping the value of variable $v_i$ while keeping the values of other variables in $\sigma$, is still a satisfying assignment, that is, $(\sigma_{-i} \wedge ( \neg \sigma_i)) \models \Psi$.  
We use
$e_{v_i}(\sigma)$ to denote the indicator of the event that variable $v_i$ is flippable in a satisfying assignment $\sigma$. By differentiating \eqref{eq:derivative F} using the expression in \eqref{eq: derivative F2}, we obtain the following expression for the second derivative of $F$ (shown in \cite{GKK24}):
\begin{equation}
    \label{eq: GKK bound 3}
      \frac{\partial^2 F(\beta;\sigma)}{\partial \beta^2} = \frac{e^\beta}{(e^\beta+1)^2} \sum_{v\in V} e_v(\sigma).
\end{equation}
Hence, proving \eqref{eq: GKK bound 2} reduces to establishing a linear lower bound on the number of flippable variables. The main 
ingredient in the proof of our positive result
is  Lemma~\ref{lem: flippable}, which provides such a lower bound under the condition \eqref{eq:k/2 condition 2}.

\begin{lemma}
    \label{lem: flippable}
Let $\beta^*$ be a real number, and let $d,k\geq 3$ be integers such that 
    \begin{equation*}
        d\le \frac{1}{e^3\sqrt{k}} \cdot (1+e^{-|\beta^*|})^{\frac{k}{2}}.
    \end{equation*}
    Then for a fixed $\Phi=(V,C)\in \Phi_{n,k,d}$, 
    \begin{equation}\label{eq: many flippable}
         \Pr_{\Phi, \beta^*} \left[\sum_{v\in V} e_v(\sigma) = \Omega(n)\right] = 1- e^{-\Omega(n)}.
    \end{equation}
\end{lemma}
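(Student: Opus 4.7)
The plan is to use the Lovász Local Lemma (LLL) to compare the conditioned distribution $\pi := \Pr_{\Phi,\beta^*}$ with a tractable product distribution under which the number of flippable variables is easily controlled. Let $p := e^{\beta^*}/(1+e^{\beta^*})$ and let $\mu$ denote the product measure on $\{\TRUE,\FALSE\}^n$ in which each variable is independently $\TRUE$ with probability $p$. Since $\mu(\sigma) \propto e^{\beta^* m(\sigma)}$, the target satisfies $\pi(\cdot) = \mu(\,\cdot \mid \sigma \models \Phi)$. Writing $\rho := \max(p,1-p) = (1+e^{-|\beta^*|})^{-1}$, the hypothesis of the lemma reads $d\rho^{k/2} \leq 1/(e^3\sqrt{k})$.

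First I would bound the non-flippability probability under $\mu$ at a single variable. A variable $v$ is non-flippable iff some clause $c \ni v$ is uniquely satisfied by $v$'s literal. For any fixed $c \ni v$, the event that $v$ uniquely satisfies $c$ specifies the values of all $k$ variables of $c$, each with $\mu$-marginal at most $\rho$; hence this event has $\mu$-probability at most $\rho^k$. A union bound over the at most $d$ clauses through $v$ yields
\[
\Pr_\mu[v \text{ is non-flippable}] \leq d\rho^k \leq \rho^{k/2}/(e^3\sqrt{k}) < 1/2
\]
for $k\geq 3$. In parallel, I would lower-bound the satisfaction probability via LLL: the bad events $B_c = \{c \text{ is violated}\}$ satisfy $\Pr_\mu[B_c] \leq \rho^k$ with dependency at most $kd$, and the asymmetric LLL with weights $x_c = \Theta(1/(kd))$ (admissible under the hypothesis, since $kd \cdot \rho^k \leq \sqrt{k}\rho^{k/2}/e^3 < 1$) gives $\mu[\sigma \models \Phi] \geq e^{-O(n/k^2)}$.

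To conclude, I would transfer concentration from $\mu$ to $\pi$ via the bound $\Pr_\pi[A] \leq \Pr_\mu[A]/\mu[\sigma \models \Phi]$. It remains to prove an exponential upper bound $\Pr_\mu[X \geq n/2] \leq e^{-cn}$ on the number $X := \sum_v \mathbbm{1}[v\text{ non-flippable}]$ of non-flippable variables under $\mu$, for some rate $c$ strictly exceeding the $O(1/k^2)$ LLL overhead. This is the "non-local" LLL application emphasized in the proof overview, and is the main technical obstacle: the $k/2$ exponent in the hypothesis (strictly stronger than the $k$-exponent needed merely for satisfiability) furnishes exactly the slack required to carry out a Chernoff-type concentration on $X$, exploiting the local-dependency structure of the $X_v$'s together with the smallness of $\E_\mu[X] \leq n/(20)$. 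By contrast, a naive local LLL applied to each non-flippability event $\{v \text{ non-flippable}\}$ individually would incur an $e^{O(d)}$ amplification (arising from the $O(kd^2)$ clauses in the second neighborhood of $v$), forcing the considerably more restrictive parameter regime $d \lesssim k\log(1/\rho)$ of \cite{GKK24}. Applying LLL only \emph{once}, globally, to the non-local event $\sigma \models \Phi$ avoids this blow-up and enables the nearly optimal $k/2$-exponent condition.
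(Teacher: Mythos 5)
There is a genuine gap in the transfer step, and it is precisely at the place you flag as ``the main technical obstacle''; the slack you claim the $k/2$-exponent provides does \emph{not} in fact compensate for the dependency degradation in your Chernoff bound.

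Your plan is to use the naive comparison $\Pr_\pi[A]\le \Pr_\mu[A]/\mu[\sigma\models\Phi]$ together with the LLL lower bound $\mu[\sigma\models\Phi]\ge \prod_{c\in C}(1-x_c)$. With $x_c=\Theta(1/(kd))$ and $|C|\le nd/k$ this overhead is of order $e^{\Theta(n/k^2)}$ (and even with the smaller choice $x_c=1/(d^2k+1)$ it is of order $e^{\Theta(n/(dk^2))}$). On the other side, the non-flippability indicators $X_v$ have dependency degree $\Delta=\Theta(d^2k^2)$ (two of them are correlated whenever the variables lie within graph-distance two), so any Chernoff/Janson-type bound for $\Pr_\mu[X> n/2]$ degrades by a factor $1/\Delta$ in the exponent: even in the most favourable accounting one gets at best $e^{-\Theta(n\log(kd)/(d^2k^2))}$. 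The ratio of exponents is therefore $\Theta\!\left(\frac{\log(kd)}{d}\right)$, which tends to $0$ as $d$ grows. Since the hypothesis of the lemma permits $d$ to be exponentially large in $k$ (e.g.\ $d\approx 2^{k/2}/(e^3\sqrt k)$ when $\beta^*=0$), the deviation bound you obtain under $\mu$ is strictly weaker than the $1/\mu[\text{sat}]$ blow-up, and the transfer fails. The flaw is structural: the naive overhead scales with \emph{all} clauses of $\Phi$ (roughly $nd/k$ of them), while the concentration rate scales with the much smaller quantity $n/\Delta$. No choice of $x_c$ fixes this imbalance because it only trades $n/k^2$ for $n/(dk^2)$, still far larger than $n\log(kd)/(d^2k^2)$.

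What the paper does differently, and why it closes the gap: it invokes the sharper LLL comparison (Lemma 26 of \cite{GJL19}), $\Pr_{\Phi,\beta}[E]\le \Pr_\mu[E]\prod_{j\in\Gamma(E)}(1-x_j)^{-1}$, in which the correction factor ranges only over clauses touching the set $S$ of variables that \emph{determine} the event $E$. It then chooses $E$ to depend only on a set $U$ of $R=\Omega(n)$ variables at pairwise distance $\ge 100$, so that (i) the indicators $\{e_v\}_{v\in U}$ are mutually independent under $\mu$ (so the union/Chernoff bound for $\Pr_\mu[\sum_{v\in U}e_v< R/3]$ is clean, with no $1/\Delta$ loss), (ii) $|\Gamma(E)|\le Rd^2 k$, so with $x_c=1/(d^2k+1)$ the correction factor is at most $e^{R}$, and (iii) $\sum_{v\in V}e_v\ge \sum_{v\in U}e_v$, so the lower bound on the restricted sum transfers to the total. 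Crucially both the correction factor and the deviation probability then scale as $e^{\Theta(R)}$, and the hypothesis $e^3 d\rho^{k-1}<1$ makes the latter win. Your naive transfer is the $S=V$ specialisation of this lemma; the ``non-local'' idea in the paper is not to condition globally, but to track the overhead only over the footprint of a carefully scattered, mutually independent batch of variables. You would need to replace your step 3 with this footprint-sensitive comparison (and step 4 with the far-apart-set union bound) to make the argument go through.
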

Using Lemma~\ref{lem: flippable} and the identity \eqref{eq: GKK bound 3}, we derive \eqref{eq: GKK bound 2} under the condition \eqref{eq:k/2 condition 2}, thereby completing the proof of Theorem~\ref{thm: consistent}. The proof of Lemma~\ref{lem: flippable} will be presented in the next section.

\subsection{Proof of Lemma~\ref{lem: flippable}: applying the Lov\'asz local lemma in a batch} 
\label{sec: flippable}

The following version of the Lov\'asz Local Lemma (LLL) from \cite{GJL19} will be useful for our proof of Lemma~\ref{lem: flippable}.

\begin{lemma}[Lemma 26, \cite{GJL19}]
\label{lem:LLL GJL19}\label{lem:L}
    Let $\Phi=(V,C)$ be a CNF formula.
    Let $\mu$ be the product distribution on $\{\TRUE,\FALSE\}^{|V|}$, where each variable is set to $\TRUE$ independently with probability $e^\beta/(1+e^\beta)$.
    For each $c \in C$, let $A_c$ be the event that clause $c$ is unsatisfied.
    If there exists a sequence $\{x_c\}_{c\in C}$ such that 
    for each $c \in C$,
    \begin{equation}
        \label{eq:LLL condition}
        \Pr_\mu[A_c] \le x_c \cdot \prod_{j\in \Gamma(c)} (1 - x_j), 
    \end{equation}
    where $\Gamma(c) \subseteq C$ is the set of clauses that contain a variable in $var(c)$, then $\Phi$ has a satisfying assignment. 
    Moreover, the distributions $\Pr_{\Phi, \beta}$ and $\Pr_\mu$ can be related as follows: for any event $E$ that can be completely determined by the assignment of a set $S$ of variables, 
    \begin{equation}
        \label{eq: LLL translation}
        \Pr_{\Phi, \beta}[E] \le \Pr_\mu[E] \cdot \prod_{j \in \Gamma(E)} \frac{1}{1 - x_j},
    \end{equation}
    where $\Gamma(E)$ denotes the set of all clauses that contain a variable in $S$.
\end{lemma}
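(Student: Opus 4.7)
The plan is to derive both conclusions of the lemma from a single inductive inequality on $\mu$-conditional probabilities of the bad events $A_c$, following the standard proof of the asymmetric Lov\'asz Local Lemma. The starting observation is that $\Pr_{\Phi,\beta}$ is exactly $\mu$ conditioned on the event $\bigcap_{c \in C}\bar{A}_c$ that every clause is satisfied: setting $p := e^\beta/(1+e^\beta)$, each $\sigma \in \{\TRUE,\FALSE\}^n$ has $\mu$-mass $(1-p)^n e^{\beta m(\sigma)}$, which after renormalising on $\Omega_n(\Phi)$ reproduces \eqref{eq:model}. Both parts of the lemma therefore reduce to statements about the product measure $\mu$.

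The main technical step is the inductive inequality: for every $i \in C$ and every $S \subseteq C \setminus \{i\}$,
$$\Pr_\mu\bigl[A_i \,\big|\, \bigcap_{j\in S}\bar{A}_j\bigr] \le x_i.$$
I would prove this by induction on $|S|$. Split $S = S_1 \sqcup S_2$ with $S_1 := S \cap \Gamma(i)$ and $S_2 := S \setminus \Gamma(i)$, and expand the conditional probability as the ratio of $\Pr_\mu[A_i \cap \bigcap_{j\in S_1}\bar{A}_j \mid \bigcap_{j\in S_2}\bar{A}_j]$ over $\Pr_\mu[\bigcap_{j\in S_1}\bar{A}_j \mid \bigcap_{j\in S_2}\bar{A}_j]$. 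Because $\mu$ is a product measure and every $j \in S_2$ has $var(j) \cap var(i) = \emptyset$, the numerator is at most $\Pr_\mu[A_i]$, which by \eqref{eq:LLL condition} is at most $x_i \prod_{j\in \Gamma(i)}(1-x_j)$. The denominator is bounded below by iteratively applying the inductive hypothesis to each clause of $S_1$ in turn, yielding at least $\prod_{j\in S_1}(1-x_j) \ge \prod_{j\in\Gamma(i)}(1-x_j)$. Dividing gives the claimed bound $x_i$.

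Both conclusions follow quickly. For satisfiability, unfolding $\Pr_\mu[\bigcap_{c\in C}\bar{A}_c]$ as a telescoping product of conditional probabilities and applying $\Pr_\mu[\bar{A}_i \mid \cdot] \ge 1 - x_i$ yields $\Pr_\mu[\bigcap_c \bar{A}_c] \ge \prod_c (1-x_c) > 0$, so a satisfying assignment exists. For the translation \eqref{eq: LLL translation}, let $T := \Gamma(E)$ and observe that the events $\{\bar{A}_c : c \in C \setminus T\}$ depend only on variables outside $var(S)$, hence are independent of $E$ under the product measure $\mu$. Writing
$$\Pr_\mu\bigl[E \,\big|\, \bigcap_c \bar{A}_c\bigr] \le \frac{\Pr_\mu[E] \cdot \Pr_\mu\bigl[\bigcap_{c\notin T}\bar{A}_c\bigr]}{\Pr_\mu\bigl[\bigcap_c \bar{A}_c\bigr]} = \frac{\Pr_\mu[E]}{\Pr_\mu\bigl[\bigcap_{c\in T}\bar{A}_c \,\big|\, \bigcap_{c\notin T}\bar{A}_c\bigr]},$$
and bounding the conditional probability in the denominator below by $\prod_{j\in T}(1-x_j)$ via another application of the inductive inequality (one clause of $T$ at a time) gives the required estimate.

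The main obstacle is the bookkeeping in the inductive step: after splitting $S = S_1 \sqcup S_2$, each application of the inductive hypothesis to a clause of $S_1$ must shrink the conditioning set appropriately, and the independence of $A_i$ from the $S_2$-events depends crucially on the product structure of $\mu$ together with $S_2 \cap \Gamma(i) = \emptyset$. Beyond these routine checks, no ideas beyond the classical proof of the asymmetric LLL, combined with the product-measure identification $\Pr_{\Phi,\beta}(\cdot) = \Pr_\mu(\cdot \mid \bigcap_c \bar{A}_c)$, are needed.
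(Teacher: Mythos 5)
Your proposal is correct: the identification $\Pr_{\Phi,\beta}(\cdot)=\Pr_\mu(\cdot\mid\bigcap_c\bar{A}_c)$ together with the standard inductive inequality $\Pr_\mu[A_i\mid\bigcap_{j\in S}\bar{A}_j]\le x_i$ yields both the satisfiability claim and the comparison bound \eqref{eq: LLL translation} exactly as you describe. The paper does not prove this lemma itself but quotes it from \cite{GJL19}, and your argument is essentially the proof given there (the classical asymmetric LLL induction plus peeling off the clauses in $\Gamma(E)$ one at a time), so no further comment is needed beyond the routine caveats you already note (each $x_c\in[0,1)$, and positivity of the conditioning events handled within the induction).
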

In many previous works utilising the LLL for sampling purposes, Lemma~\ref{lem:L}, or a variant of it, is typically applied to  local events, including, for instance, the event that a specific variable  is flippable or, more generally, to events happening in the neighbourhood of a vertex. This is present in the approach of \cite{GKK24} which, in the end, imposed stricter conditions on $d$ (relative to $k$) since it requires ``marking'' variables appropriately (see \cite{Moitra19}).
% This local approach often requires additional techniques, such as correlation decay \cite{WangYin24} or Moitra’s marking \cite{Moitra19}, to handle the behaviour of multiple variables. 
Here, we prove Lemma~\ref{lem: flippable} by applying Lemma~\ref{lem:LLL GJL19} directly to a \emph{batch} of random variables scattered around the graph in one go, removing the need for marking, and relaxing significantly the conditions on $d$. This simple idea enables our stronger result.
% Although this idea is simple, we believe it may have broader applications in the study of the LLL.

\begin{proof}[Proof of Lemma~\ref{lem: flippable}]
Let $\beta = |\beta^*|$ and let $V=\{v_1,\dots, v_n\}$.
    For any $c\in C$, we set $x_c = 1/(d^2k+1)$. Note $\Gamma(c)\le dk$.
    By 
\eqref{eq:k/2 condition 2}
and the trivial bound 
$1\le d^2 k$,  
we have
    \[d^2 k+1 \le  
    \frac{1}{2e} (1+e^{-\beta})^k =
    \frac{1}{2e} \left( \frac{1+e^{\beta}}{e^{\beta}} \right)^k. \]
    Also, since $d,k\ge 3$, we have 
    \[
    \frac{1}{2e} \le  \left(1-\frac{1}{dk} \right)^{dk}.
    \]
    Thus, for each $c\in C$,
    \begin{align*}
        \Pr_\mu[A_c] &\le \left( \frac{e^\beta}{1+e^\beta} \right)^k 
        \le \frac{1}{2e} \cdot \frac{1}{d^2k+1} 
        \le  \left(1-\frac{1}{dk} \right)^{dk} \frac{1}{d^2k+1} \\
        &\le  \left(1-\frac{1}{d^2k+1} \right)^{dk} \frac{1}{d^2k+1}
        \le x_c \cdot \prod_{j\in \Gamma(c)} (1 - x_j),
    \end{align*}
    establishing condition \eqref{eq:LLL condition} in Lemma~\ref{lem:LLL GJL19}.
    
    Next we will show \eqref{eq: many flippable}   under condition \eqref{eq:LLL condition}. 
    Recall for an assignment $\sigma$ of $\Omega_n(\Phi)$ and a variable $v_i\in V$, $e_{v_i}(\sigma)=1$ if in every clause $c$ containing $v_i$, there is a variable $v_j\neq v_i$ that satisfies $c$. 
    Since $d$ and $k$ are bounded, for sufficiently large $n$, 
    there exists a set $U$ of $R=\Omega(n)$ variables $v_1, \dots, v_R$ such that
    $d(v_i, v_j) \ge 100$ for all $1\le i<j \le R$, where distance $d(\cdot, \cdot)$ is defined as the graphical distance in the hypergraph corresponding to $\Phi$.
    Let $E$ denote the event $\{\sum_{i=1}^R e_{v_i}(\sigma) < \frac{R}{3}\}$.

    First, we compute the probability of $E$ under the product distribution $\mu$, where each variable is set to $\TRUE$ independently with probability $e^\beta/(1+e^\beta)$. 
    We apply a union bound by noting that if $E$ occurs then there are at least $\frac{2R}{3}$ variables in $U$ that are not flippable.
    \begin{equation}
        \label{eq: union bound 1}
        \Pr_\mu[E]  = \Pr_\mu\bigg[\sum_{i=1}^R e_{v_i}(\sigma) < \frac{R}{3}\bigg] \le \binom{R}{2R/3} \left( \max_{v_i\in U} \Pr_\mu[e_{v_i}(\sigma) = 0] \right)^{2R/3}.
    \end{equation}
    If $e_{v_i}(\sigma) = 0$, then there exists a clause $c_j$ such that $v_i\in var(c_j)$ and $c_j$ is not satisfied by $var(c_j) \setminus \{v_i\}$ in $\sigma$.
    We apply another union bound over all $c_j$ in which $v_i$ appears, and obtain
    \begin{equation}
        \label{eq: union bound 2}
        \Pr_\mu[e_{v_i}(\sigma) = 0] \le d\cdot \left( \frac{e^\beta}{1+e^\beta} \right)^{k-1}.
    \end{equation}
    From \eqref{eq: union bound 1} and \eqref{eq: union bound 2}, we have
    \begin{equation}
        \label{eq: product prob E}
        \Pr_\mu [E] \le  \binom{R}{2R/3} 
        \bigg[d\cdot \left( \frac{e^\beta}{1+e^\beta} \right)^{k-1}  \bigg]^{2R/3}.
    \end{equation}

We now apply Lemma~\ref{lem:LLL GJL19}  to relate the distribution $\Pr_{\Phi, \beta}[\cdot]$  to $\Pr_\mu[\cdot]$. 
Let $S$ be the set of variables  that are either in $U$ or share a clause with variables in $U$.  Then    $E$ is determined by   the variables $S$. 
So $\Gamma(E)$ is the set of all clauses containing variables in $S$, and thus $|\Gamma(E)| \le Rd^2k$.
    It follows from  \eqref{eq: LLL translation}, \eqref{eq: product prob E}, and the standard bound $\binom{n}{m} \leq \big(\tfrac{ne}{m}\big)^m$
    that
    \begin{align*}
                \Pr_{\Phi, \beta}[E] &\le \Pr_\mu[E] \cdot \left( 1 - \frac{1}{d^2 k+1} \right)^{-|\Gamma(E)|}  \\
                &\le  \binom{R}{2R/3} 
        \bigg[d\cdot \left( \frac{e^\beta}{1+e^\beta} \right)^{k-1}  
        \bigg]^{2R/3} \left( 1 - \frac{1}{d^2 k+1} \right)^{-Rd^2k} \\
        &\le 
(3e/2)^{2R/3}
\bigg[d\cdot \left( \frac{e^\beta}{1+e^\beta} \right)^{k-1}  
        \bigg]^{2R/3} e^R 
        \le \bigg[ e^3 \cdot d \cdot \left( \frac{e^\beta}{1+e^\beta} \right)^{k-1}  \bigg]^{2R/3}.
    \end{align*}
    This completes the proof since we have $
        e^3 \cdot d \cdot \big( \frac{e^\beta}{1+e^\beta} \big)^{k-1} <1$ by our assumption \eqref{eq:k/2 condition 2}.  
\end{proof}

\section{Impossibility of learning: proofs of Theorems~\ref{thm: impossible4} and~\ref{thm:impossible2} } 
For the construction of the impossibility instances, we begin with a ``gadget'' that will serve as a building block.

\begin{lemma}\label{lem: gadget}
Let $k\ge 4$ be an even integer and let $n\geq k$ be a multiple of $k/2$. If $d\ge k^2/2$, then there is a formula $\Psi_0\in\Phi_{n,k,d}$  such that if an assignment $\sigma$ of $\Psi_0$   is satisfying then either 
    \begin{enumerate}
        \item $\sigma$ has $n$ $\TRUE$s, or
        \item $\sigma$ has at least $2n/k$ $\FALSE$s.  
    \end{enumerate}
Similarly (by symmetry), there is a formula $\Psi_1 \in \Phi_{n,k,d}$ such that, for any satisfying assignment $\sigma$ of $\Psi_1$, either   $\sigma$ has $n$  $\FALSE$s, or $\sigma$ has at least $2n/k$  $\TRUE$s.   
\end{lemma}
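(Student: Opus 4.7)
My plan is to build $\Psi_0$ from a cyclic block structure. Partition the $n$ variables into $m:=2n/k$ disjoint blocks $B_1,\dots,B_m$, each of size $k/2$, arranged cyclically so that indices $r-1$ and $r+1$ are read modulo $m$. For every triple $(r,v,t)$ with $r\in[m]$, $v\in B_{r+1}$, and $t\in B_{r-1}$, introduce the clause
\[
C_{r,v,t} \;:=\; \Bigl(\bigvee_{u\in B_r} \neg u\Bigr) \;\vee\; v \;\vee\; \Bigl(\bigvee_{w\in B_{r-1}\setminus\{t\}} \neg w\Bigr).
\]
When $m\ge 3$ the three blocks $B_{r-1},B_r,B_{r+1}$ are distinct, so $C_{r,v,t}$ has exactly $k/2+1+(k/2-1)=k$ distinct variables; in the borderline case $m=2$ one restricts to $t=v$ to keep the literals distinct, which recovers the natural ``forbid a unique-$\FALSE$'' clauses. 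A short bookkeeping shows that every variable appears in exactly $k/2$ clauses as the positive literal $v$, in $k^2/4$ clauses as a negative literal of $B_r$, and in $k^2/4-k/2$ clauses as a negative literal of $B_{r-1}\setminus\{t\}$, totalling a uniform degree of $k^2/2$, so $\Psi_0\in\Phi_{n,k,d}$ whenever $d\ge k^2/2$. The all-$\TRUE$ assignment satisfies every $C_{r,v,t}$ because the positive literal $v$ is true.

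The heart of the argument is a combinatorial analysis. Write $F(\sigma)$ for the set of variables set to $\FALSE$, and suppose for contradiction that $\sigma$ satisfies $\Psi_0$ with $1\le |F(\sigma)|<m$. Partition the blocks into $S_{\TRUE}:=\{r:B_r\text{ is entirely }\TRUE\text{ under }\sigma\}$ and $S_{\FALSE}:=[m]\setminus S_{\TRUE}$; both are non-empty, so the cycle contains a transition $r\in S_{\TRUE}$ with $r+1\in S_{\FALSE}$. Pick any $v\in B_{r+1}$ with $\sigma(v)=\FALSE$. Since $\sigma$ satisfies $C_{r,v,t}$ for every $t\in B_{r-1}$, since $B_r$ is all-$\TRUE$, and since $\sigma(v)=\FALSE$, each residual set $B_{r-1}\setminus\{t\}$ must contain a $\FALSE$ variable. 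Letting $t$ range over $B_{r-1}$ forces $B_{r-1}$ to carry at least \emph{two} $\FALSE$ variables. In particular $r-1\in S_{\FALSE}$, which rules out any $\TRUE$-run of length $\ge 2$: every maximal $\TRUE$-run on the cycle has length exactly $1$.

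Let $\tau:=|S_{\TRUE}|$. Isolated $\TRUE$-blocks on the cycle require a non-empty $\FALSE$-run between every pair of consecutive $\TRUE$-blocks, giving $|S_{\FALSE}|=m-\tau\ge \tau$, i.e., $\tau\le m/2$. Each of the $\tau$ $\TRUE$-blocks is immediately preceded cyclically by a distinct $\FALSE$-block holding at least two $\FALSE$ variables, while the remaining $m-2\tau$ $\FALSE$-blocks each contribute at least one, so
\[
|F(\sigma)|\;\ge\; 2\tau+(m-2\tau)\;=\;m\;=\;2n/k,
\]
contradicting $|F(\sigma)|<m$. The symmetric formula $\Psi_1$ is obtained by negating every literal in the construction, and the identical argument with $\TRUE$ and $\FALSE$ swapped yields the $2n/k$ lower bound on the number of $\TRUE$ variables in any non-all-$\FALSE$ satisfying assignment. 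The delicate point I expect is arranging the padding $B_{r-1}\setminus\{t\}$ so that, as $t$ ranges over $B_{r-1}$, the clauses jointly force at least \emph{two} $\FALSE$ variables in $B_{r-1}$; this strengthening is what promotes the count from the weaker $\approx n/(k-1)$ one would get from a naive chain of implications to the required $2n/k$.
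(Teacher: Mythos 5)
Your proof is correct, but it takes a genuinely different route from the paper's. The paper builds $\Psi_0$ from a \emph{sliding window} of batches $C_i=\{x_i,\dots,x_{i+k/2-1}\}$ for \emph{all} $i\in\{0,\dots,n-1\}$ (so consecutive batches overlap), with clauses of the form $W_{i,\ell}\vee\Pi_{i+k/2}$ encoding ``if $x_\ell$ is the unique $\FALSE$ in $C_i$, then $C_{i+k/2}$ has a $\FALSE$''; the lower bound is then obtained by following a chain of forced $\FALSE$s inductively around the cycle (Lemma~\ref{lem:CoverGadget}'s stopping-time argument with sequences $j(t),\ell(t),G(t)$). You instead use $m=2n/k$ pairwise \emph{disjoint} blocks and, for each $(r,v,t)$, a clause encoding ``if $B_r$ is all $\TRUE$ and $v\in B_{r+1}$ is $\FALSE$, then $B_{r-1}\setminus\{t\}$ has a $\FALSE$''; varying $t$ then forces \emph{two} $\FALSE$s in $B_{r-1}$, which kills $\TRUE$-runs of length $\ge 2$ and permits a one-shot global count $2\tau+(m-2\tau)=m$. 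Both gadgets have $nk/2$ clauses and degree exactly $k^2/2$ when $m\ge3$, and both deliver the same gap $2n/k$; your counting argument avoids the explicit chain/induction and the case split inside the paper's Claim, which is a mild simplification. One small inaccuracy worth flagging: the claimed ``uniform degree $k^2/2$'' is computed for $m\ge 3$; in the borderline case $m=2$ (i.e.\ $n=k$), the restricted triple set $\{t=v\}$ gives degree $k$, not $k^2/2$. This does not affect the conclusion since $k\le k^2/2$ for $k\ge 4$, but the bookkeeping should acknowledge the case split rather than asserting uniformity.
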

The instance~$\Psi_0$ and~$\Psi_1$ will lead to a proof of Theorem~\ref{thm:impossible2}.
For clarity, we denote the assignment to $n$ variables with 
$n$ $\TRUE$s as $\sigma^+$.
\begin{proof}[Proof of Theorem~\ref{thm:impossible2}]
    Assume $\beta^* \ge k \ln 2$ without loss of generality.
    Let $\Psi_0$ be the formula given by Lemma~\ref{lem: gadget} (when $\beta^* \leq -k\ln 2$ we use $\Psi_1$ instead).
    Suppose $\sigma$ is drawn from $\Pr_{\Psi_0, \beta^*}$, and we directly estimate the probability of $\{\sigma = \sigma^+\}$:
    since $\sigma$ has at most $2^n$ possibilities and on event $\{\sigma\neq \sigma^+\}$ the number of $\TRUE$s is at most $n- 2n/k$, we have
    \begin{align*}
        \Pr_{\Psi_0, \beta^*}[\sigma = \sigma^+] \ge 
        \frac{e^{\beta^*n}}{e^{\beta^*n} + 2^n \cdot e^{\beta^* \cdot  (n - 2n/k)}} \ge \frac{2^{kn}}{2^{kn} + 2^{n+k\cdot(n-2n/k)}}
        = \frac{1}{1+2^{-n}}.
    \end{align*}
    As the samples from $\Pr_{\Psi_0, \beta^*}$ are insensitive to $\beta^*$ with high probability, learning $\beta^*$ from $\sigma$ is impossible.
\end{proof}

We now present a detailed description of the formula that defines $\Psi_0$ in Lemma~\ref{lem: gadget}. 
Let $k\geq 3$ be an even integer and let $n\geq k$ be a multiple of $k/2$.  
Let $N = \{0,\ldots,n-1\}$.
The variables of~$\Psi_0$ are $\{ x_i \mid i\in N\}$.

For the construction, it will be helpful to group variables in batches of $k/2$ variables in a cyclic manner. Specifically, for $i\in  \mathbb{N}$, consider the $i$-th batch of indices
\[\Xi_i = \{ i+j \pmod n \mid j \in \{0,\ldots, k/2-1\} \}\] 
and let $C_i = \{x_\ell \mid \ell \in \Xi_i\}$ be the corresponding variables in the $i$-th batch. We now introduce two types of length-$k/2$ clauses $W_{i,\ell}$ and $\Pi_i$ that will be used to form the final length-$k$ clauses. 
Specifically, for each $\ell$ in the batch $\Xi_i$, $W_{i,\ell}$ is the length-$(k/2)$ clause with variable set~$C_i$ in which $x_\ell$ appears positively and all other variables are negated. 
Let $\Pi_i$ be the length-$(k/2)$ clause with variable set~$C_i$ in which all variables are negated. 
Finally, 
\[\Psi_0 := \bigwedge_{i\in N, \ell \in \Xi_i} (W_{i,\ell} \vee \Pi_{i+k/2}).\]
(so
 $\Psi_0$
is the formula with variable set $  \{ x_i \mid i \in N\}$ 
and clause set 
$  \{ W_{i,\ell} \vee \Pi_{i+k/2} \mid  {i \in  N}, {\ell \in \Xi_{i}}  \}  
 $).
The formula $\Psi_1$ is obtained from $\Psi_0$ by negating all of the literals.

Note that $\Psi_0, \Psi_1 \in \Phi_{n,d,k}$ for every integer $d\geq k^2/2$,  
since for each $j\in N$, the literals~$x_j$ and $\neg x_j$ occur (together) $k^2/2$  times. To see this, 
note that for each $j\in N$, $x_j$ or $\neg x_j$ comes up in $W_{i,\ell} \vee \Pi_{i+k/2}$ for 
all $i\in \{j-k-1 \pmod n,\ldots,j\}$ and all $\ell \in  \Xi_i$ so this is $k$ different $i$'s and $k/2$ different $\ell$'s.  
In the proof of Lemma~\ref{lem:CoverGadget} we will demonstrate that  $\Psi_0$ (and analogously, $\Psi_1$) satisfies the requirements of Lemma~\ref{lem: gadget}.

\begin{lemma}\label{lem:CoverGadget}
Let $k\ge 4$ be an even integer and let $n\geq k$ be a multiple of $k/2$. 
If $\sigma \neq \sigma^+$ satisfies $\Psi_0$
then,  for all 
$\ell \in \{0,1,\ldots, 2n/k-1\}$, 
$\sigma$ assigns at least one variable in~$C_{\ell k/2} \cup C_{(\ell+1)k/2}$ to $\FALSE$ and $\sigma$ has at least $2n/k$ $\FALSE$s in total.
\end{lemma}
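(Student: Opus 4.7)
The plan is to extract local constraints on the FALSE set $F := \{i \in N : \sigma(x_i) = \FALSE\}$ directly from the clauses of $\Psi_0$, and then aggregate them using the cyclic structure of the formula. For each $i \in N$, let $g_i := |C_i \cap F|$ count the FALSEs in the batch $C_i$. Inspecting $W_{i, \ell} \vee \Pi_{i+k/2}$, one sees that this clause is unsatisfied precisely when $x_\ell = \FALSE$ and every other variable of $C_i \cup C_{i+k/2}$ is TRUE, i.e., when $g_i = 1$ (with the unique FALSE at $\ell \in C_i$) and $g_{i+k/2} = 0$. Since $\sigma \models \Psi_0$, this yields the pointwise implication
\[
g_i = 1 \ \Longrightarrow\ g_{i+k/2} \geq 1 \qquad \text{for every } i \in N,
\]
which will drive the counting step.

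To establish the first conclusion of the lemma, I prove the stronger fact that $g_i + g_{i+k/2} \geq 1$ for every $i \in N$ (so every length-$k$ window $C_i \cup C_{i+k/2}$ contains a FALSE); the lemma's claim is then the case $i = lk/2$. Suppose instead some window $\{x_i, \ldots, x_{i+k-1}\}$ is entirely TRUE. Apply the clause $W_{i-1, i-1} \vee \Pi_{i-1+k/2}$: the second disjunct is unsatisfied because $C_{i-1+k/2} \subseteq \{x_i, \ldots, x_{i+k-1}\}$ is all TRUE, while the first disjunct is unsatisfied unless $x_{i-1} = \TRUE$, since every other variable of $C_{i-1}$ lies in the all-TRUE window. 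Hence $x_{i-1}$ is forced to be TRUE, extending the all-TRUE window by one position on the left; iterating propagates TRUE around the whole cycle, contradicting $\sigma \neq \sigma^+$.

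For the total count, consider the cyclic shift $\phi: i \mapsto i + k/2 \pmod n$. Since $n$ is a multiple of $k/2$, $\phi$ partitions $N$ into $k/2$ disjoint orbits, each of length $2n/k$. Within each orbit, the previous paragraph says no two consecutive values of $g$ are zero, while the pointwise implication says a value $1$ forces the successor to be positive. I claim each orbit satisfies $\sum g \geq 2n/k$. Partition the orbit into $A = \{g = 0\}$, $B = \{g = 1\}$, $D = \{g \geq 2\}$; since $|A|+|B|+|D| = 2n/k$, it suffices to prove $|D| \geq |A|$. Between any two consecutive zeros of the orbit, some index must lie in $D$: otherwise every intermediate value equals $1$, and iterating the implication would force $g \geq 1$ at the next zero. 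Sending each zero to the next $D$-element in its orbit is an injection $A \to D$, giving $|D| \geq |A|$ and hence $\sum g \geq |B| + 2|D| \geq |A|+|B|+|D| = 2n/k$.

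Summing the orbit bound over all $k/2$ orbits yields $\sum_{i \in N} g_i \geq n$. Since each $f \in F$ lies in exactly $k/2$ batches (namely $C_{f-k/2+1}, \ldots, C_f$ taken cyclically), we have $\sum_{i \in N} g_i = (k/2)|F|$, whence $|F| \geq 2n/k$, giving the second conclusion. The trickiest step is the propagation in paragraph two---one must select a clause whose $\Pi$-half sits entirely inside the known-TRUE window so that its $W$-half reduces to a single free literal, which is then forced to be $\TRUE$. Once propagation is in place, the orbit decomposition and the injection $A \to D$ make the count essentially routine.
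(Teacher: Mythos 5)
Your proof is correct, and it takes a genuinely different route from the paper's. You extract two global constraints on the batch-count function $g_i = |C_i \cap F|$---namely $g_i + g_{i+k/2} \ge 1$ for all $i$ (via the window-propagation argument in your second paragraph) and $g_i = 1 \Rightarrow g_{i+k/2} \ge 1$ (directly from the clause shape)---and then run a counting argument on the $k/2$ orbits of the shift $i \mapsto i + k/2 \pmod n$, using the injection $A \to D$ to show $\sum g \ge 2n/k$ per orbit, and finally double-count via $\sum_i g_i = (k/2)|F|$. The paper instead proves a ``chasing'' claim: from one $\FALSE$ variable $x_a \in C_{\ell k/2}$ it locates either a $\FALSE$ in $C_{(\ell+1)k/2}$ or, failing that, two new $\FALSE$s in $C_{\ell k/2}$ and $C_{(\ell+2)k/2}$; it then constructs an explicit sequence $\{j(t), \ell(t), G(t)\}$ with a stopping time $T$ and shows $|G(T)| \ge 2n/k$ because each step advances $\ell$ by 1 or 2 while adding the same number of new $\FALSE$s to $G$. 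Your orbit argument is arguably cleaner and more symmetric (no need to carry along an explicit accumulated set or worry about the ``skip'' case), while the paper's path-following argument is more constructive and transparently explains \emph{where} the $\FALSE$s live; both hinge on the same local structure of $W_{i,\ell} \vee \Pi_{i+k/2}$. One small point worth noting in your write-up: the injection $A \to D$ is well-defined only because the ``between any two consecutive zeros there is a $D$-element'' claim also shows $D \neq \emptyset$ whenever $A \neq \emptyset$---worth a half-sentence, though the argument as given does cover it implicitly.
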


\begin{proof}[Proof of Lemma~\ref{lem:CoverGadget}]
We will use the following claim as a key step of the proof.

{\bf Claim:} If  $\sigma\neq\sigma^+$ satisfies $\Psi_0$, and $\sigma$ assigns one variable $x_a$ in $C_{\ell k/2}$ to $\FALSE$, then either
\begin{enumerate}
    \item $\sigma$ assigns a variable $x_b$ in $C_{(\ell+1)k/2}$ to $\FALSE$, or
    \item All variables in $C_{(\ell+1)k/2}$ are assigned to $\TRUE$ in $\sigma$, and there exist variables $x_c\in C_{(\ell +2)k/2}$ and $x_d\in C_{\ell k/2}\setminus\{x_a\}$ that are assigned to $\FALSE$ in $\sigma$.
\end{enumerate}

{\bf Proof of the Claim:} 
Suppose we are not in the first case, so we will show that $\sigma$ assigns $\FALSE$ to $x_c\in C_{(\ell +2)k/2}$ and $x_d\in C_{\ell k/2}\setminus\{x_a\}$. 
Since $\sigma$ satisfies $\Psi_0$, 
% it satisfies $\psi_{\ell k/2}$ so 
it satisfies at least one of $W_{\ell k/2,a}$ and $\Pi_{(\ell+1)k/2}$. 
Since variables in $C_{(\ell+1)k/2}$ are all assigned to $\TRUE$ by the assumption that we are not in the first case, $\sigma$ does not satisfy $\Pi_{(\ell+1)k/2}$. 
If $\sigma$ satisfies $W_{\ell k/2,a}$ then it assigns a variable 
$x_{d}$ to $\FALSE$ where $d \in \Xi_{\ell k/2} \setminus \{a\}$.
Also, the set $\{j\in \Xi_{\ell k/2}: \sigma(x_j)=\FALSE\}$ is not empty.
Let $j=\max\{j\in \Xi_{\ell k/2}: \sigma(x_j)=\FALSE\}$.
Note that $\sigma$ does not satisfy $W_{j,j}$, so for $\sigma$ to satisfy $W_{j,j} \vee \Pi_{j+(k/2)}$, it must satisfy $\Pi_{j+(k/2)}$, which means $\sigma$
assigns a variable $x_c\in C_{j+(k/2)}$ to $\FALSE$.
Since $x_c\notin C_{(\ell+1)k/2}$ and $C_{j+(k/2)}\subseteq C_{(\ell+1)k/2} \cup C_{(\ell+2)k/2}$, we establish that $x_c\in  C_{(\ell+2)k/2}$. This concludes the proof of the claim.

We now show how to use the claim to prove the lemma. 
Fix an assignment $\sigma \neq \sigma^+$ that satisfies $\Psi_0$.
Fix an index $j(0)$ so that $x_{j(0)}$  is assigned $\FALSE$ by~$\sigma$. 
By symmetry of $\Psi_0$, we could assume $j(0)\in \Xi_0$. Let $\ell(0)=0$ and $G(0)=\{x_{j(0)}\}$.
Consider three sequences $\{j(t)\}_{t\ge 0}$, $\{\ell(t)\}_{t\ge 0}$ and $\{G(t)\}_{t\ge 0}$ defined recursively as follows. 
For every positive integer~$t$,
applying the claim to $a=j(t)\in \Xi_{\ell(t)k/2}$, 
in the first case we let 
\[\ell(t+1)=\ell(t)+1, 
\quad j(t+1)=b\in \Xi_{(\ell(t)+1)k/2} 
\quad \text{and} \quad G(t+1)=G(t)\cup \{x_b\};\] 
in the latter case 
we let 
\[
    \ell(t+1) = \ell(t)+2, \quad j(t+1)=c\in \Xi_{(\ell(t)+2)k/2} \quad  \text{and} \quad G(t+1)=G(t)\cup \{x_c, x_d\}.
\]
By induction on~$t$ (with base case $t=0$) 
we conclude that 
\begin{enumerate}
\item $j(t)\in \Xi_{\ell(t)k/2}$ is assigned to $\FALSE$,
\item $\sigma$ assigns all variables in $G(t)$ to $\FALSE$, and
\item $\ell(t) +2 \ge \ell(t+1) \ge \ell(t) +1$ for $t<T$, where 
\[T:=\min\{t>0: \ell(t)=2n/k-1 \text{ or } 2n/k\}\] is a stopping time of $\{j(t), \ell(t), G(t)\}_{t\ge 0}$.
\end{enumerate}
By construction, for all 
$l\in \{0,1,\ldots,\frac{2n}{k}-1\}$, $G(T)\cap (C_{lk/2} \cup C_{(l+1)k/2})$ is not empty. Hence we have proved the first part of the lemma. 

Next we will show that $|G(T)|\ge \frac{2n}{k}$. For this, observe that for $t<T$, 
\[|G(t+1)|=|G(t)|+\ell(t+1)-\ell(t).\] 
Since $|G(0)|=1$, $\ell(0)=0$ and $\ell(T)\ge \frac{2n}{k}-1$, it holds that $|G(T)|\ge 2n/k$.
\end{proof}

\begin{proof} [Proof of
Lemma~\ref{lem: gadget}] 
In the case of $\Psi_0$, first note that for all $i\in N$, $\sigma^+$ satisfies all instances of $W_{i,\ell} \vee \Pi_{i+k/2}$, so $\sigma^+$ satisfies $\Psi_0$. Also, if $\sigma\neq \sigma^+$,
then the lemma follows immediately from  Lemma~\ref{lem:CoverGadget}.  
The case of $\Psi_1$ holds analogously.
\end{proof}

The names of the indices of the variables of~$\Psi_0$ are not very important, and when we generalise the construction in Lemma~\ref{lem: large gadget} it will be useful to consider an arbitrary permutation of them.
Here is the notation that we will use.  
Let $\pi$ be any permutation of~$N=\{0,\ldots,n-1\}$. We use the notation $\pi(i)$ to denote the element in~$N$ that~$i$ is mapped to by~$\pi$.
We will construct a formula $\Psi_0^{\pi}$.
Taking $\id$ to be the identity permutation on~$N$, the formula $\Psi_0$ that was already defined is~$\Psi_0^{\id}$.

For $i\in  \mathbb{N}$, let
\[\Xi_i^{\pi} = \{ \pi(i+j \pmod n) \mid j \in \{0,\ldots, k/2-1\} \}\] 
and let $C_i^{\pi} = \{x_\ell \mid \ell \in \Xi_i^{\pi}\}$.
For $\ell \in \Xi^{\pi}_i$, let $W_{i,\ell}^{\pi}$ be the length-$(k/2)$ clause with variable set~$C_i^{\pi}$ in which $x_\ell$ appears positively and all other variables are negated. 
Let $\Pi_i^{\pi}$ be the length-$(k/2)$ clause with variable set~$C_i^{\pi}$ in which all variables are negated. Then
\[\Psi_0^\pi := \wedge_{i\in N, \ell \in \Xi_i} (W_{i,\ell}^\pi \vee \Pi^\pi_{i+k/2}).\]  
The proof that $\Psi_0^{\pi}  \in \Phi_{n,d,k}$ for any $d\geq k^2/2$ is exactly the same as the case $\pi=\id$.
The formula $\Psi_1^\pi$ is obtained from $\Psi_0^\pi$ by negating all of the literals.
 
The following corollary follows immediately from the proof of Lemma~\ref{lem:CoverGadget} (by renaming the indices using~$\pi$)
and the fact that $C_{0}, C_{k/2}, \dots, C_{2n/k-1}$ are disjoint sets.

\begin{corollary}\label{cor:CoverGadget}
Let $k\ge 4$ be an even integer and let $n\geq k$ be a multiple of $k/2$. 
Let $\pi$ be a permutation of $N$. 
If $\sigma \neq \sigma^+$ satisfies $\Psi_0^\pi$
then there exists a subset $\mathcal{M}^\pi(\sigma)\subseteq \{0,1,\dots, 2n/k-1\}$ of size at least $n/k$ such that 
for all $\ell \in \mathcal{M}^\pi(\sigma)$, $\sigma$ assigns at least one variable in~$C^\pi_{\ell k/2}$ to $\FALSE$.
\end{corollary}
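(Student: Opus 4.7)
The plan is to reduce directly to Lemma~\ref{lem:CoverGadget} via the permutation $\pi$, and then extract $\mathcal{M}^\pi(\sigma)$ using a one-line cyclic covering argument. First, I would observe that the proof of Lemma~\ref{lem:CoverGadget} transfers verbatim from $\Psi_0$ to $\Psi_0^\pi$: every step of that argument (the Claim stepping from batch $\Xi_{\ell k/2}$ to $\Xi_{(\ell+1)k/2}$ or $\Xi_{(\ell+2)k/2}$, and the induction on the sequences $j(t), \ell(t), G(t)$) uses only the cyclic-shift structure on the batch indices $i \in N$ together with the shape of the clauses $W^\pi_{i,\ell} \vee \Pi^\pi_{i+k/2}$. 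Since $\pi$ is merely a relabelling of variables and the batches $\Xi^\pi_i = \{\pi(i+j \bmod n) : 0\le j<k/2\}$ inherit the same cyclic-shift relations, the argument goes through identically for $\Psi_0^\pi$. In particular, for every $l \in \{0,1,\ldots, 2n/k-1\}$, $\sigma$ assigns $\FALSE$ to at least one variable in $C^\pi_{lk/2} \cup C^\pi_{(l+1)k/2}$, with the batch index wrapping so that $C^\pi_{n} = C^\pi_0$.

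Next, I would take the natural candidate
\[
\mathcal{M}^\pi(\sigma) := \bigl\{ \ell \in \{0,1,\ldots,\tfrac{2n}{k}-1\} : \sigma \text{ assigns } \FALSE \text{ to some variable of } C^\pi_{\ell k/2} \bigr\},
\]
which manifestly satisfies the per-element requirement of the corollary. The conclusion of Lemma~\ref{lem:CoverGadget} above rephrases as the covering statement: for every $l \in \{0,1,\ldots,2n/k-1\}$, at least one of $l$ and $l+1 \pmod{2n/k}$ belongs to $\mathcal{M}^\pi(\sigma)$. Equivalently, on the cyclic order of length $2n/k$, no two consecutive positions lie simultaneously outside $\mathcal{M}^\pi(\sigma)$. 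The disjointness of the batches $C^\pi_0, C^\pi_{k/2}, \ldots, C^\pi_{(2n/k-1)k/2}$ referenced in the excerpt ensures that distinct elements of $\mathcal{M}^\pi(\sigma)$ correspond to distinct witnessing $\FALSE$ variables, which is how one recovers the total count of at least $n/k$ $\FALSE$s (a weakening of Lemma~\ref{lem:CoverGadget}'s total).

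Finally, the bound $|\mathcal{M}^\pi(\sigma)| \geq n/k$ follows from a trivial cyclic covering estimate: the gaps between successive elements of $\mathcal{M}^\pi(\sigma)$ on the length-$2n/k$ cycle all have length at most $1$, hence $2\,|\mathcal{M}^\pi(\sigma)| \geq 2n/k$. There is no real obstacle in this proof; the only point deserving brief care is verifying that the inductive argument of Lemma~\ref{lem:CoverGadget} depends only on the cyclic arrangement of the batches (on the index set~$N$) and not on the specific integer labels of the variables, which is immediate once one unpacks the definitions of $\Xi^\pi_i$ and $C^\pi_i$.
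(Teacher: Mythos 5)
Your proposal is correct and follows essentially the same route as the paper: relabel via $\pi$ to transfer Lemma~\ref{lem:CoverGadget} to $\Psi_0^\pi$, then extract $\mathcal{M}^\pi(\sigma)$ from the covering property. The only stylistic difference is that you argue from the \emph{statement} of Lemma~\ref{lem:CoverGadget} (the cyclic covering ``for every $l$, at least one of $l,\,l+1 \bmod 2n/k$ is occupied'') combined with a clean no-two-consecutive-gaps counting bound, whereas the paper's one-line proof invokes the internal sequence $\ell(0),\ldots,\ell(T)$ constructed in Lemma~\ref{lem:CoverGadget}'s proof, whose increments are $1$ or $2$, to get the same $\geq n/k$ count directly; both give the bound. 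One small point: the disjointness of $C^\pi_0, C^\pi_{k/2},\ldots$ is not actually needed for your argument (your $\mathcal{M}^\pi(\sigma)$ and its size bound come purely from the covering statement), so your aside about it ``recovering the total count of $\FALSE$s'' is tangential and could be dropped without affecting the proof.
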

\begin{remark}
 While Corollary~\ref{cor:CoverGadget} does not guarantee the uniqueness of $\mathcal{M}^\pi(\sigma)$, in what follows we define $\mathcal{M}^\pi(\sigma)$ to be the lexicographically smallest set among all the smallest sets satisfying the corollary. Since there are finitely many such sets and they can be lexicographically ordered, $\mathcal{M}^\pi(\sigma)$ is a unique and well-defined set for given $\pi$ and $\sigma$.
\end{remark}

Lemma~\ref{lem: gadget} provides formula $\Psi_0$   with a GAP property applying to the number of $\FALSE$s in  its satisfying assignments. 
In the next lemma, we use  $\Psi_0$  to build a larger formula that amplifies the GAP property to make the gap arbitrarily large.
\begin{lemma} 
    \label{lem: large gadget}
Let $k\ge 4$ be an even integer, let $b>1$ be a real number, and let 
$d$
be an integer satisfying
 \[d\ge k^3\big(1-\frac{1}{b}\big)^{-k/2}.\] 
Let $n\geq k$ be a sufficiently large multiple of~$k/2$.
Then there is a formula $\Psi_2\in\Phi_{n,k,d}$ such that if an assignment $\sigma$ satisfies $\Psi_2$ then  either 
    \begin{enumerate}
        \item $\sigma$ has $n$ $\TRUE$s, or
        \item $\sigma$ has at least $n/b$ $\FALSE$s.  
    \end{enumerate}
    Similarly (by symmetry), there is a formula $\Psi_3 \in \Phi_{n,k,d}$ such that, for any satisfying assignment $\sigma$ of $\Psi_3$, either   $\sigma$ has $n$  $\FALSE$s, or $\sigma$ has at least $n/b$  $\TRUE$s.   
\end{lemma}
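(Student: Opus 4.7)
The plan is to construct $\Psi_2$ as a random conjunction of ``permuted'' copies of $\Psi_0$ (from Lemma~\ref{lem: gadget}) and apply the probabilistic method. Set $t := \lfloor 2d/k^2 \rfloor$, so that the hypothesis $d \ge k^3(1-1/b)^{-k/2}$ gives $t \ge 2k(1-1/b)^{-k/2}$. Pick $\pi_1, \dots, \pi_t$ i.i.d.\ uniformly at random from the symmetric group on $N = \{0,\dots,n-1\}$ and set $\Psi_2 := \bigwedge_{j=1}^t \Psi_0^{\pi_j}$. Each $\Psi_0^{\pi_j}$ has variable-degree at most $k^2/2$, so $\Psi_2$ has degree at most $tk^2/2 \le d$ and hence lies in $\Phi_{n,k,d}$; and $\sigma^+$ satisfies $\Psi_2$ because it satisfies every $\Psi_0^{\pi_j}$ by Lemma~\ref{lem: gadget}.

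It then suffices to show that, with positive probability over the $\pi_j$'s, no $\sigma \ne \sigma^+$ with fewer than $n/b$ $\FALSE$s satisfies $\Psi_2$. Lemma~\ref{lem:CoverGadget} (applied to $\pi_1$) already rules out $\sigma \ne \sigma^+$ with fewer than $2n/k$ $\FALSE$s, so only the range $s \in [2n/k,\, n/b)$ remains; this range is empty when $b \ge k/2$, in which case $\Psi_2 := \Psi_0$ works directly. Otherwise, by permutation-invariance, $\Pr[\sigma \models \Psi_0^{\pi}]$ depends only on the $\FALSE$-count $s$ of $\sigma$; call this quantity $p(s)$. Since the $\pi_j$'s are independent, a union bound gives
\[
\Pr\bigl[\exists\, \sigma \ne \sigma^+,\ (n - m(\sigma)) < n/b,\ \sigma \models \Psi_2\bigr] \ \le\ \sum_{s=\lceil 2n/k\rceil}^{\lceil n/b\rceil-1} \binom{n}{s}\, p(s)^t,
\]
and we aim to prove this sum is strictly less than~$1$.

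The key step is a sharp upper bound on $p(s)$, which I would obtain by exploiting the chain structure from the proof of Lemma~\ref{lem:CoverGadget}. Any satisfying $\sigma \ne \sigma^+$ of $\Psi_0^{\pi}$ admits, for some starting index, a chain of blocks under $\pi$ with size-$1$ and size-$2$ increments that covers the entire cyclic set of $T = 2n/k$ blocks, such that every source of a size-$2$ step carries at least two $\FALSE$s and every intervening ``skipped'' block carries no $\FALSE$. Counting the size-$s$ subsets of $N$ compatible with a given chain pattern (parametrised by the numbers $T_1, T_2$ of size-$1$ and size-$2$ steps, $T_1 + 2T_2 = 2n/k-1$), and then summing over patterns via the generating function $\sum_{T_2}\binom{2n/k-1-T_2}{T_2}\,u^{T_2}$ evaluated at $u = (1-s/n)^{k/2}$, yields a quantitative bound on $p(s)$.

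The main obstacle is verifying the inequality uniformly for $s$ in the relevant range; the hard case is $s$ close to $n/b$ with $b$ close to $1$, where the simple ``at least $n/k$ blocks are hit'' consequence of Corollary~\ref{cor:CoverGadget} is too loose, and the decisive gain comes from the stronger ``skipped blocks are all-$\TRUE$'' constraint, which produces enough decay in $p(s)^t$ to beat $\binom{n}{s}$ precisely because we took $t \ge 2k(1-1/b)^{-k/2}$. Substituting this $t$ and balancing $\binom{n}{s}$ against $p(s)^t$ with standard Stirling-type estimates finishes the bound. The formula $\Psi_3$ is obtained from the symmetric construction with $\Psi_1$ in place of $\Psi_0$ (swapping $\TRUE$ and $\FALSE$).
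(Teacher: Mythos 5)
Your construction is essentially the same as the paper's: a conjunction $\Psi_2 := \bigwedge_j \Psi_0^{\pi_j}$ of independently permuted copies of $\Psi_0$ (the exact number of copies differs but the idea is identical), and both use the probabilistic method. The \emph{analysis}, however, is genuinely different, and this is where the gap lies. You run a first-moment union bound $\sum_s \binom{n}{s}\,p(s)^t < 1$, which forces you to pay the entropy term $\log\binom{n}{s}$ and therefore requires a sharp pointwise upper bound on $p(s)$. The paper avoids this entirely: it tracks the single random variable $\minfalse(\Gamma_t)$ and shows, via Corollary~\ref{cor:CoverGadget}, that adding a fresh random permutation increases its \emph{expectation} by at least $\frac{3n}{4k}(1-\frac1b)^{k/2}$ whenever it is below $n/b$. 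Since $\minfalse(\Gamma_1)\ge 2n/k$ and $\minfalse\le n$, about $J_*\approx\frac{2k}{b}(1-\frac1b)^{-k/2}$ steps push the expectation above $n/b$, giving a positive-probability instance with no per-assignment probability estimate and no $\binom{n}{s}$ cost at all.

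The gap in your sketch is concrete and, I believe, fatal as written. You correctly identify the bound on $p(s)$ as ``the main obstacle'' but never extract a quantitative one, and the route you describe does not close it. The most readily available upper bound on $p(s)$ from Lemma~\ref{lem:CoverGadget} is the ``no two consecutive aligned blocks are all-$\TRUE$'' constraint, which on a cycle of $2n/k$ blocks with per-block all-$\TRUE$ probability $q=(1-s/n)^{k/2}$ gives $\ln(1/p(s))\approx \frac{2n}{k}q^2=\frac{2n}{k}(1-s/n)^{k}$. Writing $\delta=1-1/b$ and taking $s$ near $n/b$, you need $t\cdot\ln(1/p(s))\gtrsim \ln\binom{n}{s}\approx n\,\delta\ln(1/\delta)$. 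With your $t\approx 2k\,\delta^{-k/2}$ the left side is $\approx 4n\,\delta^{k/2}$, and the ratio $4\delta^{k/2-1}/\ln(1/\delta)\to 0$ as $\delta\to 0$ (for $k\ge 4$), so the union bound fails by an unbounded factor precisely in the hard regime $b\to 1$. Your proposed rescue -- exploiting the full chain structure (``skipped blocks all-$\TRUE$'', ``size-$2$ sources carry $\ge 2$ $\FALSE$s'') via a generating-function count -- is only stated, not carried out, and it is far from clear that it recovers the missing polynomial factor in $\delta$: the extra constraints have penalties of order $\delta^{k/2}$ and $\delta^{k/2-1}$, which do not obviously compound to beat $\delta\ln(1/\delta)$. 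By contrast, the paper's expectation-increment argument only needs the per-block missing probability $\gtrsim(1-\frac1b)^{k/2}$ and never confronts $\binom{n}{s}$, which is why the same number of permuted copies suffices there but not in a union bound.
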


\begin{proof}
Fix $k$, $b$, $d$ and $n$ as in the statement of the lemma. Let $N = \{0,\ldots,n-1\}$. The variables of $\Psi_2$ 
are $\{ x_i \mid i\in N\}$.
We will use the following notation. For any set $\Gamma$ of permutations of~$N$,
let $\Psi_2^{\Gamma}$ be the formula with variables $\{ x_i \mid i\in N\}$ and clauses $\bigcup_{\pi \in \Gamma} \{  W_{i,\ell}^\pi \vee \Pi_{i+k/2}^\pi \mid {i \in  N}, {\ell \in \Xi_{i}^\pi}  \} $.
Let \[J_*:= \max\left\{ 2, \left\lfloor \frac{2k}{b}\big(1-\frac{1}{b}\big)^{-k/2}\right\rfloor\right\}\]
and let $\Gamma_{J_*}$ be a set of 
$J_*$ 
permutations of~$N$, each chosen independently and uniformly at random.
Let the permutations in~$\Gamma_{J_*}$ be denoted
$\pi_1,\ldots,\pi_{J_*}$. For each positive integer $J\leq J_*$,
let  $\Gamma_J = \{ \pi_1,\ldots,\pi_J\}$.
The formula that we will construct is $\Psi_2 := \Psi_2^{\Gamma_{J_*}}$.
Since the degree of each formula~$\Psi_0^{\pi}$ is at most $k^2/2$, 
the degree of $\Psi_2$
is at most $d_* := J_* k^2/2$.
Note that 
 \begin{align*}
        d_* &= J_*\cdot \frac{k^2}{2} \le \max\left\{ 
          k^2,  \frac{k^3}{b}\big(1-\frac{1}{b}\big)^{-k/2}
        \right\}
        \le 
        k^3(1-\frac{1}{b})^{-k/2}.
    \end{align*}
We will show that, with positive probability over the choice of   $\Gamma_{J_*}$,  
the formula 
$\Psi_2 = \Psi_2^{J_*}$ 
satisfies the requirements in the lemma statement.
Since $\sigma^+$ 
satisfies every formula $\Psi_0^{\pi}$, it suffices to show that,  
with  positive probability, every satisfying assignment $\sigma\neq\sigma^+$ of $\Psi_2^{J_*}$ has at least $n/b$ variables assigned to~$\FALSE$.

For any set $\Gamma$ of permutations of~$N$, let 
$\notalltrue(\Gamma) \subseteq \Omega_n$
be the set of assignments $\sigma \neq \sigma^+$ that satisfy $\Psi_2^{\Gamma}$.
Recall that $m(\sigma)$ is the number of variables that are assigned to $\TRUE$ by~$\sigma$. 
Let $\minfalse(\Gamma) =  \min\{n-m(\sigma) \mid  \sigma\in \notalltrue (\Gamma) \} $, so that $\minfalse(\Gamma)$ is the minimum number of $\FALSE$ variables in any $ \sigma\in \notalltrue (\Gamma)$.
Since 
$\notalltrue(\Gamma_1) \supseteq 
\notalltrue(\Gamma_2 ) \supseteq \cdots \supseteq 
\notalltrue(\Gamma_{J_*} )$,
 we have $\minfalse(\Gamma_1) \leq \cdots \leq \minfalse(\Gamma_{J_*})$.
It suffices to show that, with positive probability over the choice of  $\Gamma_{J_*}$,   
$\minfalse(\Gamma_{J_*}) \ge n/b$.

Using the fact that $n$ is sufficiently large, we will show that for all $t\in [J_*-1]$ and all $\Gamma_t$ such that $\minfalse(\Gamma_t) < n/b$,
\begin{equation}
\label{eq: E increment}
\E_{\pi_{t+1}}\big[\minfalse(\Gamma_{t}\cup\{\pi_{t+1}\})   
\big] \ge \min\left\{ \minfalse(\Gamma_t) + \frac{3n}{4k} \big(1-\frac{1}{b}\big)^{k/2}, \frac{n}{b} \right\}.    
\end{equation}

By Lemma~\ref{lem:CoverGadget}
(using symmetry to establish the statement for~$\pi_1$ rather than for the identity permutation),
$\minfalse(\Gamma_1) \geq 2n/k \geq  
(3n/(4k))
(1-1/b)^{k/2}$.
Thus by~\eqref{eq: E increment}, it follows that 
\begin{align*}
        \E_{\Gamma_{J_*}}[\minfalse(\Gamma_{J_*})] &\ge  \min\left\{J_* \cdot \frac{3n}{4k} \big(1-\frac{1}{b}\big)^{k/2}, \frac{n}{b} \right\}\\
        &= \min\left\{ \max\left\{ 2, \left\lfloor \frac{2k}{b}\big(1-\frac{1}{b}\big)^{-k/2}\right\rfloor\right\} \cdot \frac{3n}{4k} \big(1-\frac{1}{b}\big)^{k/2}, \frac{n}{b} \right\} \\
        &\ge \min\left\{ 
            2\cdot \frac{3n}{4k}\cdot \frac{k}{2b}, 
            \frac{4k}{3b}\big(1-\frac{1}{b}\big)^{-k/2}\cdot \frac{3n}{4k} \big(1-\frac{1}{b}\big)^{k/2}
            ,\frac{n}{b}
        \right\}  \ge \frac{n}{b}.
    \end{align*}
The second to last inequality needs some explanation. Let $x = (2k/b)(1-1/b)^{-k/2}$. If $x\geq 2$ then $\lfloor x \rfloor \geq x-x/3 = 2x/3$, and this is applied in the middle term of the final min. On the other hand, if $x<2$  
so that the maximum is taken at~$2$, then $(1-1/b)^{k/2} > k/b$ and the first term of the minimum is
\[2\cdot \frac{3n}{4k} (1-1/b)^{k/2} >
2 \cdot \frac{3n}{4k} \frac{k}{b} = \frac{3n}{2b}\geq \frac{n}{b}.\]

Since   $\minfalse(\Gamma_{J_*})$ is bounded from above by $n$, 
the conclusion 
$ \E_{\Gamma_{J_*}}[\minfalse(\Gamma_{J_*})] \geq n/b$ implies that, with positive probability,  $\minfalse(\Gamma_{J_*}) \ge n/b$, completing the proof of the lemma.
    
It remains to prove the lower bound in \eqref{eq: E increment}. We start with some notation.
For every assignment~$\sigma \in \Omega_n\setminus \{\sigma^+\}$ 
let $F(\sigma)$ be the set of indices of variables that are assigned~$\FALSE$ by~$\sigma$. 
For any set~$\Gamma$ of permutations of~$N$, 
and any $\sigma \in \notalltrue(\Gamma)$, 
let  $\smallsetfalse(\sigma, \Gamma)$ 
be the smallest (and lexicographically least, amongst the smallest) non-empty subset of $F(\sigma)$ 
such that the assignment $\sigma'$ with 
$F(\sigma') = \smallsetfalse(\sigma,\Gamma)$ satisfies $\Psi_2^{\Gamma}$. 
Clearly, 
$|\smallsetfalse(\sigma,\Gamma)| \geq \minfalse(\Gamma)$.
For any non-empty set $\smallsetfalse \subseteq N$,    any permutation $\pi$ of~$N$, and any set   $ M\subseteq \{0,1,\dots, \frac{2n}{k}-1\}$,
let \[\notalltrue(\pi,\Gamma, \smallsetfalse, M) = \{ \sigma \in \notalltrue(\Gamma\cup \{\pi\}) \mid \smallsetfalse(\sigma,\Gamma) = \smallsetfalse, \mathcal{M}^\pi(\sigma)=M\}.\]
If $\notalltrue(\pi,\Gamma, \smallsetfalse, M) \neq \emptyset$ let $\extminfalse(\pi,\Gamma,\smallsetfalse,M) = \min \{ n- m(\sigma), \sigma \in \notalltrue(\pi,\Gamma,\smallsetfalse,M)\}$.
(Otherwise, we do not define $\notalltrue(\pi,\Gamma,\smallsetfalse, M)$.)

We are now ready to prove the lower bound in \eqref{eq: E increment}.
Given a fixed  $t\in [J_*-1]$ and a fixed $\Gamma_t$ such that $\minfalse(\Gamma_t) < n/b$, consider the distribution  of $\minfalse(\Gamma_{t}\cup \{\pi_{t+1}\})$ (under the random choice of $\pi_{t+1}$).  
We will find it convenient to also 
fix~$\smallsetfalse$ and $M$. We will show the following condition~\eqref{eq:EincrementS1} 
for every $M\subseteq \{0,1,\dots, \frac{2n}{k}-1\}$ with $|M|\ge n/k$
and every 
non-empty set  $\smallsetfalse \subseteq N$
with $|S| \geq \minfalse(\Gamma_t)$
such that $\notalltrue(\pi,\Gamma, \smallsetfalse, M) \neq \emptyset$,
\begin{equation}
\label{eq:EincrementS1}
\E_{\pi_{t+1}}\big[\extminfalse(\pi_{t+1},\Gamma_t,\smallsetfalse, M)  \big] \ge \min\left\{ \minfalse(\Gamma_t) + \frac{3n}{4k} \big(1-\frac{1}{b}\big)^{k/2}, \frac{n}{b} \right\}, 
\end{equation}
proving~\eqref{eq: E increment}.
 
If $|\smallsetfalse|\geq n/b$ then trivially $\extminfalse(\pi_{t+1},\Gamma_t,\smallsetfalse,M) \geq n/b$ so suppose $|\smallsetfalse| < n/b$.
For every non-negative integer $\ell < 2n/k$, let $Y_\ell(\smallsetfalse)$ be 
the indicator for the event that the random permutation $\pi_{t+1}$ 
makes the intersection $\Xi_{\ell k/2}^{\pi_{t+1}} \cap \smallsetfalse$ empty.
Note that the sets 
in $\{ \Xi_{\ell k/2}^{\pi_{t+1}} \mid 0 \le \ell < 2n/k\}$ are disjoint, so by Corollary~\ref{cor:CoverGadget}, any
formula $\sigma \in \notalltrue(\pi_{t+1},\Gamma_t,\smallsetfalse, M)$   has at least
$|\smallsetfalse| + \sum_{\ell \in M} Y_\ell(\smallsetfalse)$ 
variables assigned to~$\FALSE$.
So we need only show that the expectation of 
$|\smallsetfalse| + \sum_{\ell \in M} Y_\ell(\smallsetfalse)$ (under the choice of the random permutation~$\pi_{t+1}$) is at least the right-hand-side of~\eqref{eq:EincrementS1}.
First, for every non-negative integer $\ell < 2n/k$,  note that
\begin{align*}
   \E_{\pi_{t+1}}[Y_\ell(S)]&=
\Pr_{\pi_{t+1}}[Y_\ell(\smallsetfalse)=1 
] = 
\prod_{r=0}^{k/2-1} \frac{n-|\smallsetfalse|-r}{n-r}\\
&\ge \big(1-\frac{|\smallsetfalse|}{n-k}\big)^{k/2} 
>  \big(1-\frac{1}{b(1-k/n)}\big)^{k/2}
\ge \frac{3}{4} \big(1-\frac{1}{b}\big)^{k/2},
\end{align*}
where the last equality holds when $n$ is large compared to $k$.
We conclude that \[\E_{\pi_{t+1}} \sum_{\ell \in M} Y_\ell(\smallsetfalse) \geq  \frac{3n}{4k}\cdot \big(1-\frac{1}{b}\big)^{k/2}.\]
Therefore, we conclude~\eqref{eq:EincrementS1}
from $|\smallsetfalse| \geq \minfalse(\Gamma_t)$. 
\end{proof}

Before proving Theorem~\ref{thm: impossible4}, we provide the following more general result from which Theorem~\ref{thm: impossible4} is a corollary. 

\begin{theorem}\label{thm: impossible3}
    For all $\beta \neq 0$, 
    let $\alpha=\alpha(|\beta|)$ be any real in $(0,1)$ satisfying
    \begin{equation} \label{eq: alpha def}
        |\beta| + \frac{\alpha}{1-\alpha} \ln{\alpha} + \ln{(1-\alpha)} > 0.
    \end{equation}
    Let  $k \geq 4$ be an even integer and let $\beta^* \in \mathbb{R}$ be such that $|\beta^*|\ge |\beta|$. Let $n$ be a multiple of $k/2$ that is large enough. 
    If
    \begin{equation}\label{eq: alpha bound}
        d \ge k^3 \alpha^{-\frac{k}{2}},
    \end{equation}
    then there exists a formula $\Phi\in \Phi_{n,k,d}$ such that 
    it is impossible to estimate $\beta^*$ from a sample $\sigma \sim \Pr_{\Phi, \beta^*}$ with high probability.
\end{theorem}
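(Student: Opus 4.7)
The proof follows the strategy outlined after Theorem~\ref{thm: impossible4}: build the instance via Lemma~\ref{lem: large gadget} with a carefully chosen gap parameter $b$, upper-bound the partition function contribution of assignments $\sigma \neq \sigma^+$, and then invoke~\eqref{eq: alpha def} to show that $\sigma^+$ carries all but an exponentially small fraction of the probability mass. By the $\TRUE/\FALSE$ symmetry of the model, I assume $\beta^* \geq 0$, replacing $\Psi_2$ with $\Psi_3$ in the negative case.

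Setting $b := 1/(1-\alpha) > 1$, the hypothesis~\eqref{eq: alpha bound} says exactly that $d \geq k^3(1-1/b)^{-k/2}$, which is the condition required by Lemma~\ref{lem: large gadget}. That lemma yields a formula $\Phi := \Psi_2 \in \Phi_{n,k,d}$ in which every satisfying assignment $\sigma \neq \sigma^+$ has $m(\sigma) \leq n - n/b = n\alpha$. Since the partition function is at least $e^{\beta^* n}$ (from $\sigma^+$ alone), and the satisfying assignments with $m(\sigma)=j$ form a subset of the $\binom{n}{j}$ assignments with that many $\TRUE$s, I would write
\[
\Pr_{\Phi, \beta^*}[\sigma \neq \sigma^+] \leq \frac{1}{e^{\beta^* n}} \sum_{j=0}^{\lfloor n\alpha \rfloor} \binom{n}{j} e^{\beta^* j}.
\]

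To handle this sum I would first show that any $\alpha$ obeying~\eqref{eq: alpha def} lies strictly below the binomial mode $e^{\beta^*}/(1+e^{\beta^*})$. Indeed, the function $f(\alpha) := |\beta| + \tfrac{\alpha}{1-\alpha}\ln\alpha + \ln(1-\alpha)$ has derivative $f'(\alpha) = \ln\alpha/(1-\alpha)^2 < 0$, and a short calculation at $\alpha_0 := e^{|\beta|}/(1+e^{|\beta|})$ simplifies $f(\alpha_0)$ to $\ln\alpha_0/(1-\alpha_0) < 0$; hence $f(\alpha) > 0$ forces $\alpha < \alpha_0 \leq e^{\beta^*}/(1+e^{\beta^*})$. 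The summand $\binom{n}{j} e^{\beta^* j}$ is therefore increasing on $\{0, \ldots, \lfloor n\alpha\rfloor\}$, and the standard entropy bound $\binom{n}{\lfloor n\alpha\rfloor} \leq \alpha^{-n\alpha}(1-\alpha)^{-n(1-\alpha)}$ gives
\[
\Pr_{\Phi, \beta^*}[\sigma \neq \sigma^+] \leq (n\alpha+1)\exp\!\Bigl(-n(1-\alpha)\bigl[\beta^* + \tfrac{\alpha}{1-\alpha}\ln\alpha + \ln(1-\alpha)\bigr]\Bigr).
\]
The bracket equals $f(\alpha) + (\beta^* - |\beta|) \geq f(\alpha) > 0$ by hypothesis, so the right-hand side is $e^{-\Omega(n)}$ and $\Pr_{\Phi, \beta^*}[\sigma = \sigma^+] = 1 - e^{-\Omega(n)}$ uniformly in the admissible range of $\beta^*$. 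As in the proof of Theorem~\ref{thm:impossible2}, the sample is then essentially deterministic and conveys no information about $\beta^*$.

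The main technical hurdle I expect is the verification that the range of $\alpha$ permitted by~\eqref{eq: alpha def} stays below $e^{\beta^*}/(1+e^{\beta^*})$. Without this, the truncated sum could be dominated by the equilibrium mode $j^* = ne^{\beta^*}/(1+e^{\beta^*})$, where the crude estimate $\binom{n}{j^*}e^{\beta^* j^*} \leq (1+e^{\beta^*})^n$ only yields a ratio of $(1+e^{-\beta^*})^n$, which never decays. Once the monotonicity and sign of $f$ are in hand, the remaining pieces—instantiating Lemma~\ref{lem: large gadget}, running the entropy bound, and translating concentration of $\sigma$ on $\sigma^+$ into non-identifiability of $\beta^*$—are routine.
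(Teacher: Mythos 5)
Your proposal is correct and follows essentially the same strategy as the paper's proof: instantiate Lemma~\ref{lem: large gadget} with $b = 1/(1-\alpha)$, upper-bound the total mass of satisfying $\sigma\neq\sigma^+$ by a truncated binomial sum, and use~\eqref{eq: alpha def} to show this is $e^{-\Omega(n)}\cdot Z_1$. The one point of departure is how you control the sum $\sum_{j\le n\alpha}\binom{n}{j}e^{\beta^* j}$: the paper uniformly bounds each term by $Z_1 e^{-\Omega(n)}$ (tacitly relying on the monotonicity of $-\frac{\alpha}{1-\alpha}\ln\alpha - \ln(1-\alpha)$ so that the bound holds for all $\alpha'\le\alpha$ in the summation range), whereas you first verify via $f'(\alpha) = \ln\alpha/(1-\alpha)^2 < 0$ and $f(\alpha_0) = \ln\alpha_0/(1-\alpha_0) < 0$ that $\alpha < e^{\beta^*}/(1+e^{\beta^*})$, which makes the summand increasing and lets you take the number of terms times the last term. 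Both routes hinge on the same monotonicity fact; your version spells out a step the paper leaves implicit, which is a small but genuine clarification. One minor writeup caveat: the inequality $\binom{n}{\lfloor n\alpha\rfloor} \le \alpha^{-n\alpha}(1-\alpha)^{-n(1-\alpha)}$ as stated is not quite the standard entropy bound (the exponent should involve $\lfloor n\alpha\rfloor/n$, not $\alpha$); the clean way to close this is to note that $p\mapsto H(p)+\beta^*p$ is increasing on $[0,\alpha]$ and bound the whole summand $\binom{n}{j}e^{\beta^* j}\le e^{n(H(\alpha)+\beta^*\alpha)}$ in one stroke, after which your final display is exactly right.
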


\begin{proof}[Proof of Theorem~\ref{thm: impossible3}]
    We first assume $\beta^*>0$ and
    let $\Psi_2$ be the formula given by Lemma~\ref{lem: large gadget}.
    Let $\alpha = \alpha(|\beta|) \in (0,1)$ be a constant satisfying 
    \eqref{eq: alpha def}.
    To see the existence of $\alpha$, note that 
    \[
    f(\alpha):=-\frac{\alpha}{1-\alpha} \ln{\alpha} - \ln{(1-\alpha)}
    \]
    is an increasing bijective function from $(0,1)$ to $(0,\infty)$.
    By \eqref{eq: alpha def}, we have $\beta\ge f(\alpha)$.
    We will show that for any $\beta^* \ge \beta$, samples from both  $\Pr_{\Psi_2, \beta^*}$ will be $\sigma^+$ with  probability $1-e^{-C_1 n}$ for some $C_1>0$. Hence, not only does one-shot learning fail with high probability, but even $e^{C_1n/2}$ many independent
  samples provide no additional information with high probability.

    Setting $b=(1-\alpha)^{-1}$, condition \eqref{eq: alpha bound} becomes $d\ge k^3\big(1-\frac{1}{b}\big)^{-k/2}$. Thus,  Lemma~\ref{lem: large gadget} yields 
    \begin{equation} \label{eq:prob of pf}
        \Pr_{\Psi_2, \beta^*} 
        [\sigma = \sigma ^+] 
        = \frac{Z_1}{Z_1 + \sum_{\alpha=0}^{1-1/b} Z_\alpha }, \text{ where }
        Z_\alpha:=\sum_{\sigma\in\Omega_n: m(\sigma)=\alpha n} e^{\beta^* \alpha n} \mathbbm{1}[\sigma\models \Psi_2].
    \end{equation}
    Using Stirling's approximation $
            \binom{n}{\alpha n} \le \frac{(\alpha^\alpha\cdot (1-\alpha)^{1-\alpha})^{-n}}{\sqrt{2\pi n \alpha(1-\alpha)}}$, we obtain that 
    \begin{equation}
        \label{eq: apply stiling}   Z_\alpha \le e^{\beta^* \alpha n} \binom{n}{\alpha n} \le 
    \exp{\left[ n\cdot 
        \big( 
            \beta^* \alpha - \ln(\alpha^\alpha\cdot (1-\alpha)^{1-\alpha})
        \big)
    \right]}.
    \end{equation}
    From the definition of $\alpha$ in \eqref{eq: alpha def}, it follows that 
    \begin{equation}
        \label{eq: apply alpha def}    \exp{\left[ n\cdot 
        \big( 
            \beta^* \alpha - \ln(\alpha^\alpha\cdot (1-\alpha)^{1-\alpha})
        \big)
    \right]} 
    < e^{\beta^* n}\cdot e^{-\Omega(n)} = Z_1\cdot  e^{-\Omega(n)}.
    \end{equation}
    Combining \eqref{eq:prob of pf} with the estimates \eqref{eq: apply stiling} and \eqref{eq: apply alpha def}, we have 
    \[
    \Pr_{\Psi_2, \beta^*} 
        [\sigma = \sigma ^+] 
        \ge \frac{Z_1}{Z_1 + n\cdot Z_1 \cdot e^{-\Omega(n)}}
        = \frac{1}{1 + n \cdot e^{-\Omega(n)}} 
        = 1 - e^{-\Omega(n)}.
    \]
    Using the formula $\Psi_3$, the proof of the case $\beta^*<0$ is completely analogous. 
\end{proof}
  
Finally, we prove Theorem~\ref{thm: impossible4} as a special case of Theorem~\ref{thm: impossible3}.
\begin{proof}
     We give an explicit $\alpha:(1,\infty) \rightarrow (0,1)$ such that $\alpha(|\beta^*|)$  satisfies \eqref{eq: alpha def} for any $|\beta^*|>1$:
    \begin{equation}
        \label{eq: alpha def 2}
        \alpha(\beta) = 1 - \frac{1}{e^{\beta-1} } = \frac{e^\beta - e}{e^\beta}.
    \end{equation}
    Thus, we obtain the explicit lower bound \eqref{eq: e bound} of $d$ 
    by plugging \eqref{eq: alpha def 2} to \eqref{eq: alpha bound}.

    Next we verify that the choice of $\alpha$ in \eqref{eq: alpha def 2} satisfies \eqref{eq: alpha def}.
    As in the proof of Theorem~\ref{thm: impossible3}, 
    we define 
    $ f(\alpha):=-\frac{\alpha}{1-\alpha} \ln{\alpha} - \ln{(1-\alpha)}$.
    To verify \eqref{eq: alpha def}, it suffices to show that $f(\alpha (\beta )) < \beta$ for all $\beta >1$.
    After some algebraic simplifications, we arrive at 
    \begin{align*}
        f(\alpha(\beta)) &= -\left[ 
           \frac{e^\beta - e}{e^\beta} \cdot \frac{1}{e^{1-\beta} }\cdot \ln\big( \frac{e^\beta - e}{e^\beta} \big)
           + \ln\frac{1}{e^{\beta-1}} 
        \right] \\
        &= -\left[ 
            (e^{\beta - 1} - 1) \cdot
            \big( \ln( e^\beta - e ) - \beta\big)
            + 1 -\beta
        \right]\\
        &= \ln( e^\beta - e ) -1 + \beta e^{\beta -1} -e^{\beta -1}\ln( e^\beta - e )\\
        &= \beta + \ln(1-e^{1-\beta})-1 + \beta e^{\beta -1} - e^{\beta -1 }\big[ \beta + \ln(1-e^{1-\beta}) \big] \\
        &= \beta +   \ln(1-e^{1-\beta}) -1 - e^{\beta -1 } \ln(1-e^{1-\beta}) \\
        &= \beta - 1+ (1 - e^{\beta -1}) \ln(1-e^{1-\beta})
    \end{align*}
    Notice for all $0<x<1$, by Taylor's expansion, 
    \[
        (1-\frac{1}{x}) \ln(1-x) = -(1-\frac{1}{x})\cdot \sum_{n=1}^\infty \frac{x^n}{n} = 1 + \sum_{n=1}^\infty \big( \frac{1}{n+1} - \frac{1}{n} \big) x^n 
        = 1 - \sum_{n=1}^\infty  \frac{x^n}{n(n+1)}<1.
    \]
    Hence, by setting $x=e^{1-\beta}$, we have shown $f(\alpha(\beta))<\beta$.
\end{proof}

\bibliography{arxiv_sub3}

\end{document}